\newcommand{\equals}{\stackrel{\mathrm{def}}{=}}
\newcommand{\ignore}[1]{}   
\begin{document}

\title{Efficiently Scheduling Parallel DAG Tasks on Identical Multiprocessors}
\titlenote{Version submitted in August 2024 to RTNS}

\author{Shardul Lendve}
\affiliation{%
  \institution{CISTER Research Centre and Faculdade de Engenharia da Universidade do Porto (FEUP)}
  \city{Porto}
  \country{Portugal}
}
\email{up202103275@edu.fe.up.pt} 

\author{Konstantinos Bletsas}
\affiliation{%
  \institution{CISTER Research Centre and ISEP}
  \city{Porto}
  \country{Portugal}
}
\email{ksbs@isep.ipp.pt}

\author{Pedro F. Souto}
\affiliation{%
  \institution{Faculdade de Engenharia da Universidade do Porto (FEUP) and CISTER Research Centre}
  \city{Porto}
  \country{Portugal}
}
\email{pfs@fe.up.pt}

\begin{abstract}

Parallel real-time embedded applications can be modelled as directed acyclic graphs (DAGs) whose nodes model subtasks and whose edges model precedence constraints among subtasks.
Efficiently scheduling such parallel tasks can be challenging in itself, particularly in hard real-time systems where it must be ensured offline that the deadlines of the parallel applications will be met at run time.
In this paper, we tackle the problem of scheduling DAG tasks on identical multiprocessor systems efficiently, in terms of processor utilisation.
We propose a new algorithm that attempts to use dedicated processor clusters for high-utilisation tasks, as in federated scheduling, but is also capable of reclaiming the processing capacity lost to fragmentation, by splitting the execution of parallel tasks over different existing clusters, in a manner inspired by semi-partitioned C=D scheduling (originally devised for non-parallel tasks).
In the experiments with synthetic DAG task sets, our Segmented-Flattened-and-Split scheduling approach achieves a significantly higher scheduling success ratio than federated scheduling.

\end{abstract}

\keywords{Directed Acyclic Graphs, identical multiprocessor systems, federated scheduling, semi-partitioned multiprocessor scheduling}
\maketitle

\section{Introduction}

Increasing demands for performance and energy efficiency underlie the popularity of commercial-off-the-shelf (COTS) multicore systems-on-a-chip (SoCs).
To best leverage the capabilities of such platforms, task parallelisation is needed. This involves dividing applications into subtasks
that can run in parallel on different cores. In the general case though, such subtasks can have precedence constraints and dependencies.
The Directed Acyclic Graph (DAG) model represents subtasks as nodes and dependencies among them as edges. DAG task scheduling is a complex problem, 
especially in hard real-time systems, where it must be ensured offline that the deadlines of the parallel applications will be met at run time. 

Federated scheduling~\cite{Li_CALGS_14} is a generalisation/counterpart of partitioned scheduling for parallel tasks~\cite{Baruah_15}. 
High-utilisation tasks that are not schedulable on a single processor are each assigned a dedicated \textit{cluster} of processors,
whereas tasks with utilisation not exceeding $1$ are scheduled as sequential (i.e., non-parallel) tasks on remaining processors.
This decomposes the scheduling problem into multiple simpler scheduling problems and also comes with proven performance guarantees, 
in the form of speed-up ratios~\cite{Li_CALGS_14}. However, it also suffers from some inefficiency due to fragmentation at the cluster level.

To usefully reclaim the processing capacity lost to fragmentation, we propose an efficient scheduling arrangement inspired by \textit{semi-partitioned} 
multiprocessor scheduling. Under our approach, some parallel DAG tasks migrate, at carefully selected static offsets relative to their release, 
from one cluster to another. This is analogous to how (a few) tasks migrate between processors under C=D~\cite{Burns_DWZ_11} -- arguably, 
the best-performing low-scheduling-overhead multiprocessor scheduling algorithm for non-parallel tasks~\cite{Santos_LB_13, Brandenburg_Gul_16}.

Next in this paper,
Section 2 discusses related work on the scheduling of parallel tasks and on semi-partitioning.
Section 3 introduces the task model under consideration. Section 4, introduces the basic building blocks of our approach.
Section 5 formulates the new scheduling algorithm. Section 6 offers an experimental evaluation, using synthetic workloads.
Finally, Section 7 provides conclusions.

\section{Related works}
Graham's work \citep{Graham_69} upper-bounds the makespan of a single DAG scheduled over $m$ identical processors, as a function of its overall workload and 
the length of its longest path, obliviously to its topology, as long as its scheduling is work-conserving and precedence constraints are respected. 
However, in the more general problem, there exist multiple such DAG tasks, with different periods.

Some existing approaches for that problem (e.g.,~\cite{Baruah_BMSW_12}) schedule DAG subtasks directly using classical algorithms 
(e.g., EDF or fixed priorities), while adapting the analysis and criteria to accommodate the attributes of DAGs. Building on~\cite{Baruah_BMSW_12}, 
Bonifaci et al.~\cite{Bonifaci_MSW_13} and Li et al.~\cite{Li_ALG_13} examined the overall timing characteristics of DAG tasks without delving 
into their internal structures,
proving a resource augmentation bound of $2-1/m$ for global EDF.

Other approaches transform the DAG into a set of independent sequential threads before scheduling. The task decomposition method by 
Saifullah et al.\cite{Saifullah_LALG_13} has a resource augmentation bound of $4$ and $5$, under global EDF and partitioned deadline 
monotonic scheduling, respectively. Lakshmanan et al.\cite{Lakshmanan_KR_10} examined real-time fork-join task scheduling on multiprocessors 
within OpenMP and introduced a sequential approach via a stretching algorithm to optimise execution. The study analyzes best-case and worst-case 
scenarios, leading to a preemptive fixed-priority scheduling algorithm with a resource augmentation bound of 3.42. 
Qamhieh et al.\cite{Qamhieh_GM_14} proposed a stretching algorithm to transform DAGs into independent sequential threads with offsets and deadlines. 
Their method executes fully-stretched master threads on dedicated processors and uses the remaining processors to schedule independent constrained-deadline 
threads, using either global EDF or global Deadline Monotonic (GDM). 
Cao and Bian~\cite{Cao_Bian_20} later improved on the assignment of internal offsets and deadlines.

Jing Li et al.~\cite{Li_CALGS_14} introduced federated scheduling, which is an extension of partitioned scheduling for parallel tasks. 
Under federated scheduling, DAG tasks with utilization above $1$ execute on their own dedicated clusters of processors; 
whereas, remaining lower-utilisation DAG tasks are scheduled as sequential threads on the remaining processors, using, e.g., 
partitioned or global scheduling, and EDF or fixed priorities.

Baruah~\cite{Baruah_15b} proposed federated scheduling for constrained-deadline sporadic tasks, with a speedup bound of $(2 - 1/m)$. 
The extension~\cite{Baruah_15} to arbitrary deadline tasks has a speedup bound of  $(3 - 1/m)$.
Using this strategy of federated scheduling and task decomposition, Bhuiyan et al.\cite{Bhuiyan_GSGX_18} focused
on creating an energy-efficient scheduling algorithm for DAG tasks. More recently, Fei Guan et al.~\cite{Guan_PQ_23} proposed another variant 
of federated scheduling, specifically optimised for DAG tasks whose deadlines exceed their period. It leverages the observation
that, for such tasks, it can sometimes be more efficient, in terms of processor usage, to use multiple smaller clusters for a given DAG 
(i.e., interleaved jobs thereof) than one bigger cluster (with enough processors to bring its response time below the respective period).

Coming from a slighly different motivation, Wasly and Pellizzoni~\cite{Wasly_Pellizzoni_19} proposed bundled gang scheduling. Gang tasks 
are parallel tasks whose threads must be executed and preempted at the exact same times as each other; bundled gang scheduling generalised 
on this model by having the number of threads of the gang task change, at different fixed points in its execution. The authors also suggest 
a technique for transforming a DAG task into an instance of the bundled gang task model, so that it can be scheduled under their approach.
In the case of DAGs with a high internal degree of parallelism, this can 
perform better than federated scheduling.

Conversely to Wasly's model, which prescribes the parallel execution of certain nodes (i.e., on different processors), under the 
model by Shi et al.~\cite{Shi_GUBC_24}, certain nodes must be executed on the same processor as each other (i.e., not in parallel, 
by definition), for reasons such as, e.g., cache locality. The authors provide schedulability analysis for this model under a form of list scheduling.

In a very different approach, Ahmed and Anderson~\cite{Ahmed_Anderson_22} use servers (one for every node of each DAG task), which are then 
scheduled under global EDF. In particular, their scheduling arrangement can be applied to systems where multiple instances (jobs) of a given DAG task 
may simultaneously be present, possibly with an upper bound on the number thereof.
\section{System model}

Consider a platform with $m$ identical processors $P_1, \ldots P_m$ and a set of independent sporadic DAG tasks $\tau =\lbrace G_1, \ldots,G_n \rbrace$. 
Each DAG task $G_i$ has an interarrival time $T_i$ and a deadline $D_i \leq T_i$ and consists of $n_i$ sequential subtasks ("nodes") with precedence constraints,
indicated by directed edges. Those nodes are denoted as $\tau_{i,1},\ldots,\tau_{i,n_i}$ and $C_{i,j}$ denotes the worst-case execution time (WCET) 
of node $\tau_{i,j}$. 
For each job of a DAG task, any node cannot start its execution until all its predecessors (i.e., nodes pointing to it via an edge) have completed theirs. 
At run time, different nodes of a given DAG task can execute in parallel with each other, as long as this does not violate any precedence constraints; 
however, a node cannot execute in parallel with itself on multiple processors. When convenient, we may also assume an imaginary "source" node $\tau_{i,0}$ 
with edges pointing to (true) nodes of $G_i$ with no (other) incoming edges. Likewise, it may be convenient to introduce an imaginary "sink" node, 
to which all nodes without other outgoing edges point. Those dummy nodes have a WCET of zero. 

The volume (or workload) of $G_i$ is denoted by $W_i$ and corresponds to the sum of its nodes' WCETs, i.e., $W_i = \sum_{\tau_{i,j} \in G_i} C_{i,j}$. 
A path's length is equal to the sum of the WCETs traversed by it. The length of the longest path of $G_i$ is denoted by $L_i$.

When an instance (job) of $G_i$ arrives at time instant $t$, its source node is immediately ready, and its other nodes become ready, as soon as 
all their predecessors have completed. All nodes of the current job of $G_i$ must have completed by time $t_i+D_i$
and its next job can arrive at any time after $t_i+D_i$, but not earlier. At any time, up to $n$ DAGs may be active (one from each DAG task in 
$\lbrace G_1, \ldots,G_n \rbrace$) and need to be scheduled on the $m$ processors. Preemption of a node and resumption of its
execution on another processor is allowed. The objective is for all DAG tasks to meet their deadlines.

\begin{table}[h]
\centering
\begin{tabular}{|c|p{5cm}|}
\hline
Symbol & Meaning \\
\hline
$P_1, \ldots, P_m$                        & the set of processors \\
$\tau = \lbrace G_1, \ldots ,G_n \rbrace$ & the set of DAG tasks \\
$T_i$                                     & interarrival time of $G_i$\\
$D_i$                                     & relative deadline of $G_i$\\
$L_i$                                     & the length of the longest path of $G_i$\\
$W_i$                                     & the total workload of $G_i$\\
$\tau_{i,1},\ldots\tau_{i,n_i}$           & the set of nodes of $G_i$\\
$C_{i,j}$                                 & the WCET of node $\tau_{i,j}$\\
\hline 
\end{tabular}
\caption{Notation used to describe the scheduling problem; not specific to our scheduling approach.}
\label{tab:symbol:table}
\end{table}

\section{The "building blocks" of our approach}

In this section we introduce some techniques that serve as building blocks for our DAG task scheduling approach.
These are:
\begin{itemize}
\item The \textit{segmentation} of a DAG task into disjoint sub-DAGs (called \textit{parallel segments}) that will be executed in a sequence.
\item The \textit{flattening} of a segmented DAG task into a fixed (relative to its arrival) schedule on a target number of processors.
\item The \textit{splitting} of a flattened schedule, inspired by C=D semi-partitioning~\cite{Burns_DWZ_12}, over disjoint processor clusters.
\end{itemize}

In the next section, we will then explain how these techniques fit together, to give rise to our multiprocessor real-time DAG task scheduling approach.

\subsection{Segmentation of DAG tasks}
\label{sect:segmentation}

Grouping of a DAG task's nodes into parallel segments (henceforth, just "segments") and executing those segments in a sequence
is akin to introducing \textit{additional} node precedence constraints to the DAG. This aims to make its scheduling more manageable. 

Let $S_{i,1},\ldots,S_{i,s_i}$ denote the segments ($s_i$ in total) of DAG $G_i$. 
For any $k$$\in$$\lbrace$$1$$,$$\ldots$$,$$s_i$$-$$1$$\rbrace$, 
we enforce the rule that all nodes in $S_{i,k}$ must have completed before the execution of any node in $S_{i,k+1}$ can start.

Let $\xi(\tau_{i,j)}$ denote the \textit{maximum} distance (in terms of hops) from the dummy source node of $G_i$ to $\tau_{i,j}$, over all paths. 

\begin{equation}
\tau_{i,j} \in S_{i,k} \Leftrightarrow \xi(\tau_{i,j)}=k
\label{eq:segment:formation}
\end{equation} 

The formation of segments according to Equation~\ref{eq:segment:formation} ensures that 
(i)~all the nodes belonging to the same segment may be executed in parallel to each other, without the possibility of any precedence constraint 
violation and 
(ii)~no precedence constraint can be violated if the segments of a given DAG are executed in order of ascending index. These properties are 
formalised by Lemma~\ref{lem:segments} below:

\begin{lemma}
Let a DAG task $G_i$ be decomposed into $s_i$ parallel segments according to Equation~\ref{eq:segment:formation}. 
If, $\forall k \in \lbrace 1, \ldots s_i-1 \rbrace$  no node in $S_{i,k+1}$ starts its execution before all nodes in $S_{i,k}$ have completed, 
then no precedence constraint is violated. Additionally, after all nodes in $S_{i,k}$ have completed, the nodes in $S_{i,k+1}$ can be executed 
in parallel, or concurrently to each other, without the possibility of any precedence  constraint violation arising as a result of that.
\label{lem:segments}
\end{lemma}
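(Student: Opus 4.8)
The plan is to reduce both claims to a single structural property of the level function $\xi$: whenever there is an edge from $\tau_{i,a}$ to $\tau_{i,b}$ (so that $\tau_{i,a}$ is a predecessor of $\tau_{i,b}$), we must have $\xi(\tau_{i,a}) < \xi(\tau_{i,b})$. I would establish this first, since everything else follows from it almost mechanically. To prove it, observe that any path from the dummy source $\tau_{i,0}$ to $\tau_{i,a}$ can be extended by the edge $(\tau_{i,a},\tau_{i,b})$ into a path from the source to $\tau_{i,b}$ that is exactly one hop longer. Taking the path that realises $\xi(\tau_{i,a})$ and extending it shows that there exists a source-to-$\tau_{i,b}$ path of length $\xi(\tau_{i,a})+1$; since $\xi(\tau_{i,b})$ is the maximum over all such paths, we obtain $\xi(\tau_{i,b}) \geq \xi(\tau_{i,a}) + 1 > \xi(\tau_{i,a})$. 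Here I would also note that $\xi$ is well-defined and finite for every node, because the graph is a finite DAG and, by construction, the source reaches every node: tracing incoming edges backward from any node must terminate at a node with no true predecessor, which the source points to.

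Next I would derive the intra-segment parallelism claim (the second sentence of the lemma, i.e.\ property~(i)). Suppose, for contradiction, that two nodes $\tau_{i,a},\tau_{i,b}$ lie in the same segment $S_{i,k+1}$ and that one is a predecessor of the other, say via an edge from $\tau_{i,a}$ to $\tau_{i,b}$. By the structural property just proved, $\xi(\tau_{i,a}) < \xi(\tau_{i,b})$; but membership in the same segment forces $\xi(\tau_{i,a}) = \xi(\tau_{i,b}) = k+1$, a contradiction. Hence no precedence constraint exists between any two nodes of the same segment, so they may be run in parallel (or concurrently) without any violation, once their predecessors, which by the same property all lie in strictly earlier segments, have completed.

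Finally, for the ordering claim (property~(ii) and the first sentence), I would show that every edge goes strictly forward in segment index. If $\tau_{i,a}\in S_{i,p}$ is a predecessor of $\tau_{i,b}\in S_{i,q}$, then the structural property gives $p = \xi(\tau_{i,a}) < \xi(\tau_{i,b}) = q$, so $p < q$. The stated scheduling rule guarantees that all of $S_{i,k}$ completes before any node of $S_{i,k+1}$ starts; applying this across the chain of segments $p, p+1, \ldots, q$ (a trivial induction on $q-p$) shows that all of $S_{i,p}$, and in particular $\tau_{i,a}$, completes before any node of $S_{i,q}$, and in particular $\tau_{i,b}$, can start. Thus every predecessor completes before its successor begins, and no precedence constraint is violated.

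The argument is short and I do not expect a genuine obstacle; the only point needing care is the first step, namely arguing that extending a maximum-length path yields the \emph{strict} inequality for $\xi$ across an edge, together with confirming that $\xi$ is finite and well-defined for every node. Everything after that is a direct translation of ``edges increase the level'' into ``edges increase the segment index,'' from which both parts of the lemma read off immediately.
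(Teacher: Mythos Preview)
Your proof is correct and rests on the same key observation as the paper's: $\xi$ strictly increases along every edge, so every predecessor of a node in $S_{i,k+1}$ lies in some strictly earlier segment and has therefore already completed. The paper packages this as an induction on the segment index $k$, whereas you isolate the monotonicity of $\xi$ as a standalone structural lemma and then read off both claims directly; the substance is the same.
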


\begin{proof}
The proof is by induction. The first step ($k=1$) trivially holds, because the nodes in $S_{i,1}$ are all eligible for execution as soon as the 
DAG task arrives, and they can be executed, in parallel or concurrently, without this possibly causing any precedence constraint violation.

For the induction step let us assume that all nodes of some $S_{i,k}$ ($k \in \lbrace 1, \ldots s_i-1 \rbrace$) have completed and no precedence 
constraint violation has occured so far. We want to prove that the nodes of $S_{i,k+1}$ can then be executed, in parallel or concurrently to each 
other, without this possibly causing any precedence constraint violation.

The fact that all nodes of $S_{i,k}$ have completed implies that the nodes of all preceding segments, i.e., $S_{i,1}$ to $S_{i,k-1}$ have also 
completed, in that order -- because no segment's nodes becomes eligible for execution until its predecessor segment's nodes have all completed, 
according to the enforced scheduling rule. 
Consider a node $\tau_{i,j} \in S_{i,k+1}$. For any predecessor $\tau_{i,\ell}$ of $\tau_{i,j})$, it holds that $\xi(\tau_{i,\ell}) \leq k$; 
this follows from the fact that $\xi(\tau_{i,j})=k+1$. This, in turn, means that, for every predecessor $\tau_{i,\ell}$ of $\tau_{i,j}$
it holds that $\tau_{i,\ell} \in \lbrace  S_{i,1},\ldots,S_{i,k}\rbrace$. Therefore all predecessors of all nodes in $S_{i,k+1}$ 
have completed, when all nodes in $S_{i,k}$ have completed. Which means that all nodes in $S_{i,k+1}$ are eligible to start their execution, at that time, 
and they can be can be executed, in parallel or concurrently to each other, without the possibility of a precedence constraint violation.
\end{proof}

\subsection{Flattening of a DAG task}
\label{sect:flattening}

Consider a segment $S_{i,k}$ and a target number of processors $m_i$. Let $W(S_{i,k}) \equals \sum_{\tau_{i,j} \in S_{i,k}} C_{i,j}$.

Using McNaughton's wraparound algorithm \cite{McNaughton_59}, it is possible to generate a schedule for the nodes of $S_{i,k}$ over $m_i$ 
processors, with a length (makespan) of $\frac{W(S_{i,k})}{m_i}$, as long as $C_{i,j} \leq \frac{W(S_{i,k})}{m_i},~\forall \tau_{i,j} \in S_{i,k}$. 
Otherwise (i.e., if $\exists \tau_{i,j} \in S_{i,k}: C_{i,j} > W(S_{i,k})$), the length of the respective schedule cannot be less than 
$\max_{\tau_{i,j} \in S_{i,k}} C_{i,j}$, for the simple reason that at least one of the nodes would have to be executed in parallel with 
itself to make the schedule any shorter.

The pseudocode in Figure~\ref{fig:pseudocode:segment:flattening} outlines the generation of a schedule for a segment $S_{i,k}$, 
with time specified relative to the start of its execution. The flattened schedule for an entire DAG is obtained by the concatenation of
the respective flattened schedules for its segments, in order of execution.

     \begin{figure}[htbp]
     \begin{flushleft}
     \begin{minipage}[t]{\columnwidth}  
     \centering
     {\footnotesize
     \begin{tabular}{rp{10cm}}
      1.&\texttt{function flatten\_segment($S_{i,k}$, $m^{\prime}$)                                                    }\\
      2.&\texttt{$\lbrace$//without loss of generality, over processors $P_1$ to $P_{m^{\prime}}$                      }\\
      3.&\texttt{\ schedule=new Schedule();                                                                            }\\
      4.&\texttt{                                                                                                      }\\ 
      5.&\texttt{\ int Cmax = $\max_{\tau_{i,j} \in S_{i,k}}$ $C_{i,j}$;                                               }\\   
      6.&\texttt{\ int W    = $\sum_{\tau_{i,j} \in S_{i,k}}$ $C_{i,j}$;                                               }\\ 
      7.&\texttt{                                                                                                      }\\ 
      8.&\texttt{\ \textbf{if} (W/$m^{\prime}$ > Cmax)                                                                 }\\ 
      9.&\texttt{\ \ schedule.length = ceil(W/$m^{\prime}$);                                                           }\\ 
     10.&\texttt{\ \textbf{else}                                                                                       }\\ 
     11.&\texttt{\ \ schedule.length = Cmax;                                                                           }\\ 
     12.&\texttt{                                                                                                      }\\ 
     13.&\texttt{\ int O=0;                                                                                            }\\ 
     14.&\texttt{\ int p=1;                                                                                            }\\ 
     15.&\texttt{                                                                                                      }\\              
     16.&\texttt{\ \textbf{for} (${\tau_{i,j} \in S_{i,k}}$)                                                           }\\ 
     17.&\texttt{\ $\lbrace$start = O;                                                                                 }\\
     18.&\texttt{\ \ end   = (O+$C_{i,j}$) modulo schedule.length;                                                     }\\
     19.&\texttt{                                                                                                      }\\
     20.&\texttt{\ \ \textbf{if} (end>start) // no wrap-around                                                         }\\
     21.&\texttt{\ \ $\lbrace$schedule.add\_interval($\tau_{i,j}$,$P_p$,start,end);//\tiny{node, processor, start, end}}\\
     22.&\texttt{\ \ \  O=end;                                                                                         }\\
     23.&\texttt{\ \ \  if (O==schedule.length) //corner case                                                          }\\
     24.&\texttt{\ \ \  $\lbrace$O=0;                                                                                  }\\
     25.&\texttt{\ \ \  \ p=p+1;                                                                                       }\\
     26.&\texttt{\ \ \  $\rbrace$                                                                                      }\\
     27.&\texttt{\ \ $\rbrace$                                                                                         }\\ 
     28.&\texttt{\ \ \textbf{else}                                                                                     }\\ 
     29.&\texttt{\ \ $\lbrace$schedule.add\_interval($\tau_{i,j}$, $P_p$, start, schedule.length);                     }\\              
     30.&\texttt{\ \ \ schedule.add\_interval($\tau_{i,j}$, $P_{p+1}$, 0, end);                                        }\\ 
     31.&\texttt{\ \ \ O=end;                                                                                          }\\
     32.&\texttt{\ \ \ p=p+1;                                                                                          }\\
     33.&\texttt{\ \ $\rbrace$                                                                                         }\\
     34.&\texttt{\ $\rbrace$                                                                                           }\\
     35.&\texttt{                                                                                                      }\\
     36.&\texttt{\ \textbf{return} schedule;                                                                           }\\
     37.&\texttt{$\rbrace$                                                                                             }\\
     \end{tabular}
     } 
     \caption{Pseudocode: flattening of an individual segment}
     \label{fig:pseudocode:segment:flattening}
     \end{minipage}
     \end{flushleft}
     \end{figure}

\begin{figure}[htbp]
\centerline{\includegraphics[width=7cm]{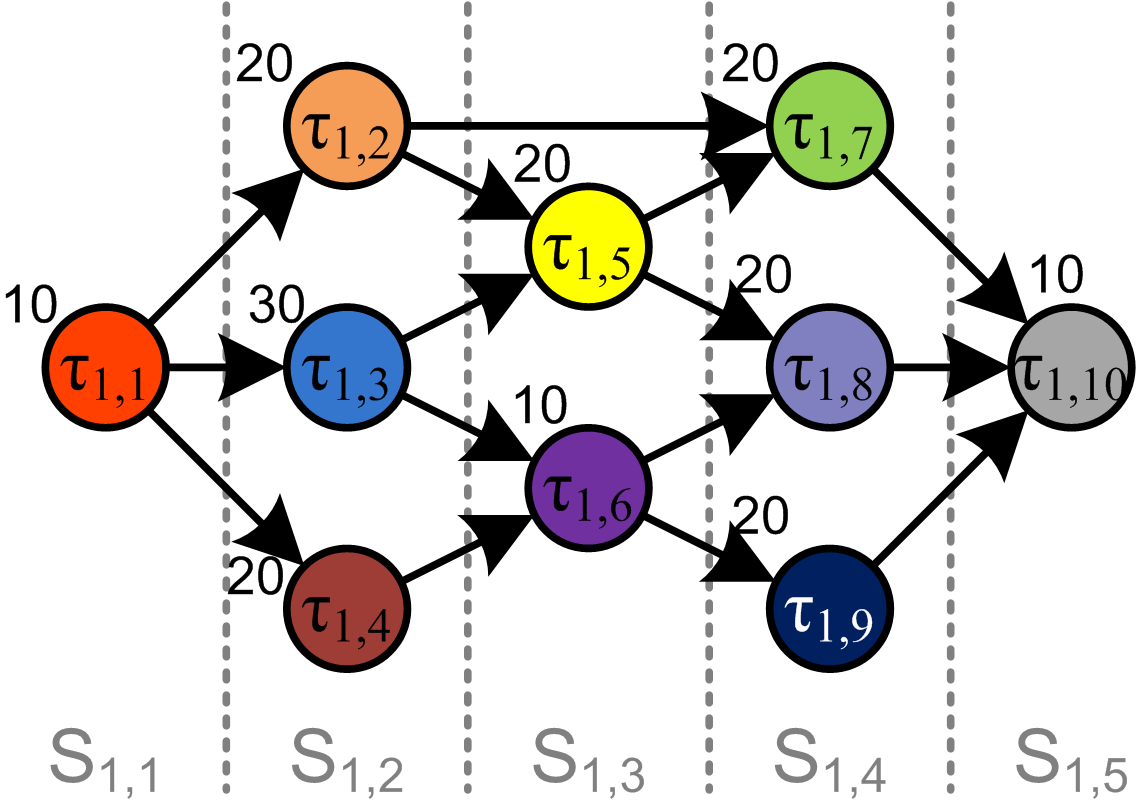}}
\caption{A DAG task $G_1 $, with a deadline of $130$, that consists of 10 subtasks (nodes). 
Each node's WCET is shown to its upper left. The DAG is broken into 5 parallel segments, separated by vertical gray dashed lines.}
\label{fig:dagexample}
\end{figure}

\begin{figure}[htbp]
\centerline{\includegraphics[width=8cm]{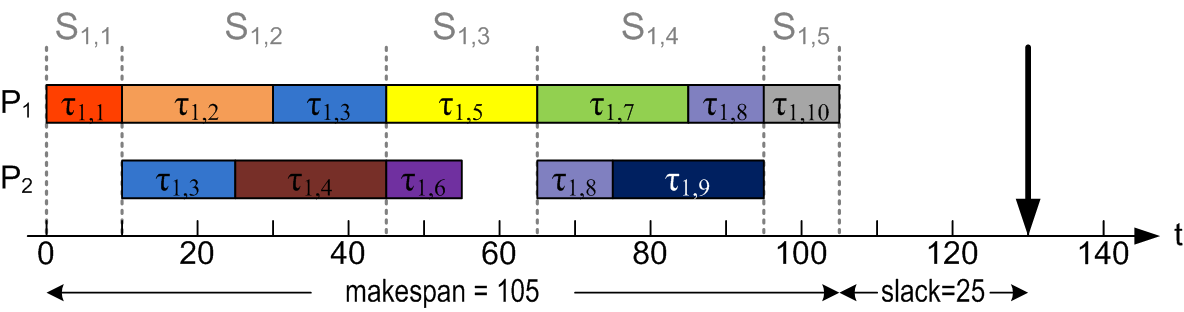}}
\caption{Flattening of the DAG of Fig.~\ref{fig:dagexample} over 2 processors. The resulting makespan is 105.}
\label{fig:sub2}
\end{figure}

To better illustrate the segmentation and flattening processes, consider the example of Figure~\ref{fig:dagexample}, where a 
DAG ${G_1}$ and its segmentation according to Equation~\ref{eq:segment:formation} are shown. Figure~\ref{fig:sub2} 
depicts a schedule for that same DAG on 2 processors, obtained via flattening each of its segments and concatenation of the 
respective schedules.

The algorithm of Figure~\ref{fig:pseudocode:segment:flattening} can be used in a loop to find the smallest feasible number of processors for a flattened DAG task,
and the corresponding schedule, as in Figure~\ref{fig:pseudocode:segment:maximum:flattening}. Lines 2 to 6 in the pseudocode of 
Figure~\ref{fig:pseudocode:segment:maximum:flattening} check if it is feasible to flatten the input DAG.
Observe that a segment's schedule's length is lower-bounded by the greatest node WCET in the segment, irrespective of the number of available processors.

     \begin{figure}[htbp]
     \begin{flushright}
     \begin{minipage}[t]{\columnwidth}  
     \centering
     \footnotesize{
     \begin{tabular}{rp{10cm}}
      1.&\texttt{function feasibly\_max\_flatten($G_i$)                                                               }\\
      2.&\texttt{$\lbrace$int LB=0;                                                                                   }\\
      3.&\texttt{\ for ($S_k$ $\in$ segments\_of($G_i$))                                                              }\\
      4.&\texttt{\ \ LB=LB+$\max_{\tau_{i,j} \in S_{i,k}}$ $C_{i,j}$;                                                 }\\ 
      5.&\texttt{\ if (LB<=$D_i$)                                                                                     }\\   
      6.&\texttt{\ \ return FAILURE;                                                                                  }\\ 
      7.&\texttt{                                                                                                     }\\ 
      8.&\texttt{\ int $m^{\prime}$ = ceil($W_i$ / min($D_i$, $T_i$)); //initialisation                               }\\ 
      9.&\texttt{\ while (true)                                                                                       }\\
     10.&\texttt{\ $\lbrace$schedule = new Schedule();                                                                }\\
     11.&\texttt{\ \ for ($S_{i,k}$ $\in$ segments\_of($G_i$))                                                        }\\ 
     12.&\texttt{\ \ \ schedule = schedule.append(flatten\_segment($S_{i,k}$, $m^{\prime}$);                          }\\   
     13.&\texttt{\ \ if (schedule.length <= $D_i$)                                                                    }\\ 
     14.&\texttt{\ \ \ return schedule;                                                                               }\\ 
     15.&\texttt{\ \ else                                                                                             }\\ 
     16.&\texttt{\ \ \ $m^{\prime}$ = $m^{\prime}$+1;                                                                 }\\ 
     17.&\texttt{\ $\rbrace$                                                                                          }\\ 
     18.&\texttt{$\rbrace$                                                                                            }\\ 
     \end{tabular}
     } 
     \caption{Pseudocode for flattening a DAG task over the smallest feasible number of processors}
     \label{fig:pseudocode:segment:maximum:flattening}
     \end{minipage}
     \end{flushright}
     \end{figure}

If the algorithm of Figure~\ref{fig:pseudocode:segment:maximum:flattening} fails, it is always worth considering Graham's bound~\cite{Graham_69} 
for the given DAG, as an alternative. The latter upper-bounds the makespan of a DAG task, scheduled by any work-conserving node scheduling algorithm on $m^{\prime}$ processors as

\begin{equation}
M_i = L_i + \left\lceil \frac{W_i-L_i}{m^{\prime}}\right\rceil
\label{eq:graham}
\end{equation}
where $L_i$ is the length of the DAG's longest path, $W_i$ is its volume.
In very rare cases, such as the example of Figure~\ref{fig:graham:counterexample}, 
the above bound can be shorter than the corresponding flattened schedule.

It follows from Graham's bound (see Theorem~2 in~\cite{Li_CALGS_14}) that an upper bound for the smallest sufficient dedicated cluster size for a 
constrained-deadline\footnote{That result was formulated for implicit-deadline DAG tasks, but it equally holds for constrained-deadline DAGs.} 
DAG task to be schedulable under any work-conserving node scheduling algorithm is given by

\begin{equation}
m^{\prime} = \left\lceil \frac{W_i-L_i}{D_i-L_i}\right\rceil
\label{eq:graham:cluster:size}
\end{equation}

\begin{figure}
\centerline{\includegraphics[width=5cm]{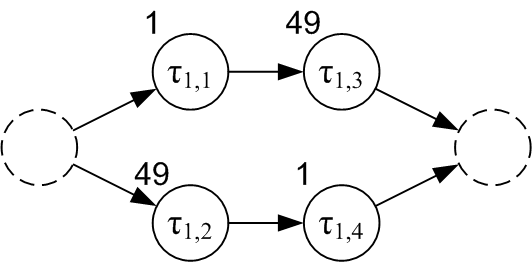}}
\caption{Graham's bound for the makespan of the above DAG over 2 processors is 50+(100-50)/2=75. The length of the corresponding 
segmented-and-flattened schedule is 49+49=98.}
\label{fig:graham:counterexample}
\end{figure}

\subsection{Splitting of a DAG task's execution over multiple clusters}

Assigning heavy DAG tasks to dedicated clusters can lead to fragmentation, analogous to bin-packing fragmentation.
The cumulative unused processor capacity over all clusters may be substantial, but not usable. To remedy this, we devise 
an arrangement inspired by \textit{semi-partitioning} -- an efficient solution to bin-packing fragmentation in the multiprocessor scheduling of non-parallel tasks. 
We specifically draw from \textit{C=D scheduling}~\cite{Burns_DWZ_12}.

\textbf{Encapsulation of a flattened schedule by a gang task:}
Consider a cluster of $m^\prime > 1$ processors where some heavy DAG task $G_{g}$ has already been assigned, to execute under its flattened schedule 
with a makespan of $M^\prime$. If a job by $G_{g}$ arrives at time $t$, not all $m^\prime$ processors will necessarily be continuously busy during the 
interval $[t,t+M^\prime)$, as flattened schedules are not work-conserving in the general case. Still, it is convenient, as we will argue, to encapsulate 
the execution of $G_{g}$ into a \textit{gang task} with a WCET of $C_{g}=M^\prime$ and a degree of parallelism equal to $m^\prime$. All parallel threads 
of a gang task must be executed in lockstep, i.e., released/preempted/resumed at the same time instants as each other. Enforcing such gang semantics allows 
us to run additional, potentially interfering, workload on the cluster without potential misalignment of the schedules of $G_{g}$ on the different processors 
(leading, e.g., to potential violations of node precedence constraints or scheduling of the same "wrapped-around" node on two processors simultaneously). 
Rather, the flattened schedule of $G_{g}$, encapsulated by the gang task, can be preempted, on all processors, for up to $D_{g}-C_{g}$ time units in total,
with no detriment to the schedulability of $G_{g}$, in order to execute other workload.

\textbf{Reducing gang scheduling to uniprocessor scheduling:}
Gang scheduling can be inefficient, because it is inherently prone to priority inversions (e.g. 2-D blocking).
For example, a ready middle-priority task may be unable to execute because it requires more processors than those left unclaimed by higher-priority tasks;
conversely, a lower-priority task with smaller degree of parallelism may execute instead in its place. However, such priority inversion cannot occur if all gang tasks 
have the same degree of parallelism, equal to the number of available processors. The scheduling problem then becomes equivalent to uniprocessor scheduling, 
because, at any instant, all the processors in a cluster will be executing the same task as each other. In turn, this makes it easier to split the execution of 
parallel DAG tasks over different processor clusters, analogously as to how the execution of non-parallel tasks is split across different processors 
in semi-partitioned scheduling.

\textbf{Results from uniprocessor and semi-partitioned multiprocessors scheduling of non-parallel tasks}

C=D~\cite{Burns_DWZ_12} is based on partitioned EDF. It uses a uniprocessor EDF scheduler on each processor but it splits the execution of some tasks across multiple 
processors, in a carefully managed manner, when those tasks cannot feasibly be assigned entirely to any processor. The "pieces" of such split tasks are modelled as separate tasks,
partitioned to different processors, and executing in a "pipelined" manner. 
By reporting to the EDF scheduler, purely for scheduling decision purposes, for each piece of a split task (except possibly the last one), 
a deadline equal to its execution time budget on the respective processor, it is ensured that such a zero-laxity piece always executes at the 
highest priority, and is never preempted. 
Hence the name "C=D". Subject to that, a split task's execution budget on a given processor is set, using sensitivity analysis, to the maximum value that preserves 
the schedulability of other tasks on that processor. Specifically: 

Let $C_s$ denote the execution budget of a non-final piece of some split task on processor $P_p$, with $D_s = C_s$. Then, using the exact schedulability test for 
uniprocessor EDF schedulability~\cite{Baruah_MR_90}, $C_s$ can be set to the maximum value that satisfies
 
\begin{align}
\mathrm{dbf}(\tau_s, t) + \sum_{\tau_i \in \Gamma_p} \mathrm{dbf}(\tau_i, t) \leq t, \forall t >0 
\label{eq:sensitivity:analysis:dbf}
\end{align}
where $\Gamma_p$ is the set of other tasks assigned to $P_p$ and the \textit{demand-bound function} of a periodic or sporadic task $\tau_i$ is defined as 

\begin{equation}
\mathrm{dbf}(\tau_i, t) \equals \max \left( 0, \left\lfloor \frac{t-D_i}{T_i} \right\rfloor +1 \right) C_i
\label{eq:dbf}
\end{equation}

The above sensitivity analysis problem can be enormously sped up using the QPA technique~\cite{Zhang_Burns_09, Zhang_BB_11}, but it can still 
have long running time. Fortunately, if all tasks in $\Gamma_p$ have implicit deadlines, 
Santos-Jr's test, known from~\cite[Theorem 3]{Santos_Lima_12} and~\cite[Theorem 3]{Santos_LB_14} (and henceforth referred to as
\textit{Augusto's test}), can be used instead.
It is a simpler, sufficient test that performs nearly as well in practice:

\begin{equation}
\frac{C_s}{T_s}  \leq  \frac{1-\displaystyle\sum_{\tau_i \in \Gamma_p} U_i}{1 +\frac{\displaystyle\sum_{\tau_i \in \Gamma_p} U_i}{\left\lfloor\frac{\min_{\tau_i \in \Gamma_p} T_i}{T_s}\right\rfloor}}
\label{eq:augusto:3}
\end{equation}
where $U_i \equals \frac{C_i}{T_i}$.

Because uniprocessor EDF is a sustainable scheduling algorithm \cite{Baruah_Burns_06}, Augusto's test can be made to also work, at the cost of some additional pessimism, 
when some of the tasks in $\Gamma_p$ have constrained deadlines, if the deadlines ($D_i$) of such tasks are used in Inequality~\ref{eq:augusto:3}, 
in place of of their interarrival times. For which case, it can be expressed as

\begin{equation}
\frac{C_s}{T_s} \leq \
\frac{1-\displaystyle\sum_{\tau_i \in \Gamma_p} \delta_i}{1+\frac{\displaystyle\sum_{\tau_i \in \Gamma_p} \delta_i}{\left\lfloor \frac{\min_{\tau_i \in \Gamma_p} \displaystyle D_i}{\displaystyle T_s}\right\rfloor}}
\label{eq:augusto:3:constrained}
\end{equation}
where $\delta_i \equals \frac{C_i}{\min(D_i, T_i)}$.

\textbf{Sensitivity analysis for the splitting of the execution of a parallel DAG task}

Consider a cluster $Q_q$ of $m^{\prime}$ processors, where a set $\Gamma(Q_q)$ of DAG tasks has already been assigned. Each parallel DAG task 
$G_i \in \Gamma(Q_q)$ can be modelled by a gang task $\tau_i$ with:
\begin{itemize}
\item a degree of parallelism equal to $m^{\prime}$;
\item the same deadline ($D_i$) and interrarival time ($T_i$)as $G_i$; and 
\item a WCET $C_i$ equal to the makespan $M_i$ of the flattened schedule for $G_i$ on $m^{\prime}$ processors -- or, if that flattened schedule was infeasible,
an upper bound on its makespan, if it were executing on $m^{\prime}$ dedicated processors under any work-conserving algorithm (as derived, e.g. by Equation~\ref{eq:graham}).
\end{itemize}

Then, because all tasks have the same degree of parallelism $m^{\prime}$, matching the number of processors in $Q_q$, the cluster $Q_q$ is schedulable under gang EDF if 
and only if the equivalent non-gang task system is schedulable under uniprocessor EDF.

Now consider a different DAG task $G_j$, encapsulated by a gang task $\tau_s$ with the same degree of parallelism ($m^{\prime}$), 
which is to be scheduled on the same cluster at the highest priority. The tasks in $\Gamma(Q_q)$ are schedulable if and only if the following 
uniprocessor task set is schedulable under EDF:

\begin{itemize}
\item $\tau_s \equals (C_s, D_s=C_s, T_s)$
\item $\tau_i \equals (C_i, D_i, T_i)$,~$\forall i: G_i \in \Gamma(Q_q)$
\end{itemize}

Clearly the sensitivity analysis for C=D can be used to determine the maximum value for $C_s$ such that the above system is feasible.

\textbf{Example of split execution of a DAG on multiple clusters}

Consider again the same DAG $G_1$ as in Figure~\ref{fig:dagexample}. There are 5 processors, and there already exist two clusters, $Q_1=(P_1, P_2)$ and $Q_2=(P_3, P_4, P_5)$, 
each already with a DAG assigned to them. The flattened schedule of $G_1$ on two processors has a length of $105$ (Figure~\ref{fig:sub2}) but, according to sensitivity analysis, 
if $G_1$ is assigned to $Q_1$ at top priority, the threshold for feasibility is $60$ time units. We therefore encapsulate $G_1$'s execution on $Q_1$ by a 
sporadic gang task with a WCET of $60$ and a deadline of $60$. When that gang task is scheduled (under EDF), the first 60 time units of the flattened schedule (on two processors) 
of $G_1$ are executed by its threads -- see top of Figure~\ref{fig:splitting}. What remains to execute from $G_1$ after that point, would need to be executed on $Q_2$ - however
$Q_2$ has three processors, so it would be wasteful to use the remaining flattened schedule for $G_1$ on two processors (crossed out in Figure~\ref{fig:splitting}) and
have one of three gang threads for $G_1$ on $Q_2$ be idle. Instead, from the segmented representation of $G_1$, we subtract the amount execution that nodes would have already 
received (i.e., in $Q_1$). The segmented representation of the "rump" DAG that remains is shown in the bottom left of Figure~\ref{fig:splitting}). Nodes $\tau_1$ to $\tau_4$ and
$\tau_6$ (to be entirely executed in $Q_1$) are entirely missing and only a portion (of the execution time) of $\tau_5$ remains. The corresponding flattened schedule on three processors for the rump segment sequence has a length of 35 time units. This schedule is executed by three corresponding gang threads on $Q_2$ with a WCET of 35, that arrive
on $Q_2$ as soon as $G_1$'s execution on $Q_1$ stops (i.e., 60 time units after the arrival of $G_1$) (Figure~\ref{fig:splitting}), bottom right). That gang EDF task has a relative 
deadline equal to $D_1-60=130-60=70$ and, in our example, it is feasible to assign to $Q_2$.

\begin{figure}
\centerline{\includegraphics[width=8.5cm]{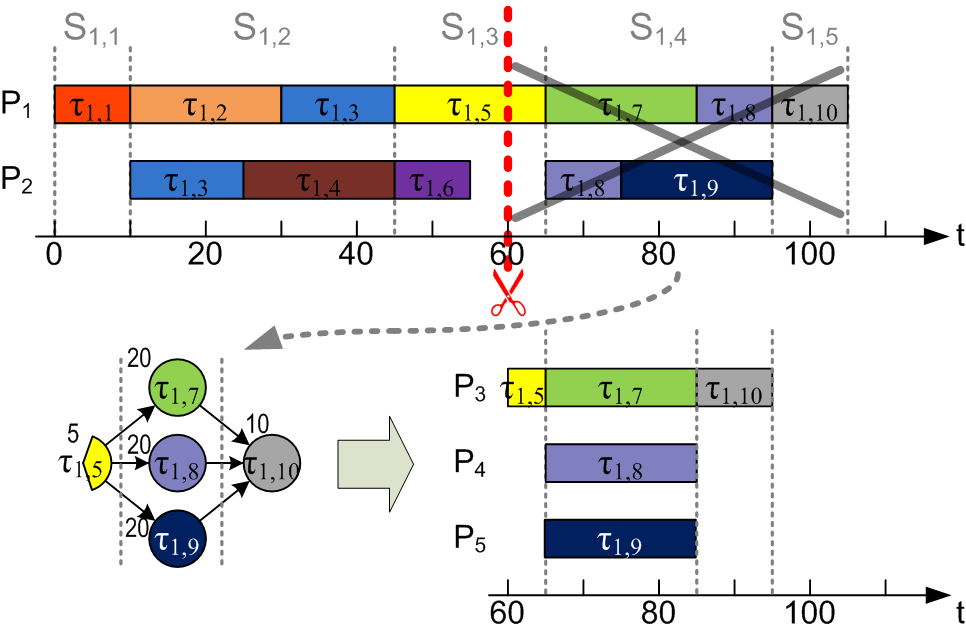}}
\caption{Illustration of semi-partitioned scheduling of a DAG over two clusters that consist of a different number of processors.}
\label{fig:splitting}
\end{figure}

\section{The complete algorithm}

In this section, we describe the complete algorithm, whose design makes use of the building blocks described in Section~4. 
The algorithm is presented in high-level pseudocode in Figure~\ref{fig:pseudocode:small}, to better convey its flow.
It is also presented in more detailed pseucodode (Figures~\ref{fig:pseudocode:detailed:part1} and~\ref{fig:pseudocode:detailed:part2})
in order to more rigorously document its workings. It performs two passes for DAG task assignment. 

\subsection{The two passes of the algorithm explained}

     \begin{figure}[htb]
     \begin{flushleft}
     \begin{minipage}[t]{8.5cm}  
     \centering
     \footnotesize{
     \begin{tabular}{rp{8cm}}
      1.&\texttt{//1st pass -- akin to federated scheduling                                                                       }\\
      2.&\texttt{\textbf{for} (every DAG $G_i$)                                                                                   }\\
      3.&\texttt{\{\textbf{if} ($G_i$ is heavy)                                                                                   }\\
      4.&\texttt{\ \{\textbf{if} possible,                                                                                        }\\ 
      5.&\texttt{\ \ \ assign $G_i$ to a dedicated cluster as a gang task                                                         }\\   
      ~ &\texttt{\ \ \ that encapsulates a flattened schedule;                                                                    }\\   
      6.&\texttt{\ \ \textbf{else}                                                                                                }\\ 
      7.&\texttt{\ \ \ skip its assignment for now;                                                                               }\\ 
      8.&\texttt{\ \}                                                                                                             }\\ 
      9.&\texttt{\ \textbf{else} //$G_i$ is light                                                                                 }\\
     10.&\texttt{\ \{\textbf{if} possible,                                                                                        }\\
     11.&\texttt{\ \ \ assign $G_i$ as a single-threaded task using First-Fit on a                                                }\\ 
      ~ &\texttt{\ \ \ separate pool of processors ("bins") that expands as needed;                                               }\\ 
     12.&\texttt{\ \ \textbf{else}                                                                                                }\\   
     13.&\texttt{\ \ \ skip its assignment for now;                                                                               }\\ 
     14.&\texttt{\ \}                                                                                                             }\\ 
     15.&\texttt{\}                                                                                                               }\\ 
      ~ &\texttt{~~                                                                                                               }\\ 
     16.&\texttt{//2nd pass -- inspired by C=D semi-partitioning                                                                  }\\ 
     17.&\texttt{\textbf{for} (every DAG $G_i$ that is still unassigned)                                                          }\\ 
     18.&\texttt{\{\textbf{if} ($G_i$ is heavy)                                                                                   }\\ 
     19.&\texttt{\ \{\textbf{if} possible,                                                                                        }\\ 
     20.&\texttt{\ \ \ split $G_i$ over already-formed clusters, as in Fig.~\ref{fig:splitting}                                   }\\ 
      ~ &\texttt{\ \ \ using C=D sensitivity analysis and flattened schedules;                                                    }\\ 
     21.&\texttt{\ \ \textbf{else}                                                                                                }\\ 
     22.&\texttt{\ \ \ \textbf{return} FAILURE;	                                                                                  }\\   
     23.&\texttt{\ \}                                                                                                             }\\ 
     24.&\texttt{\ \textbf{else} //$G_i$ is light                                                                                 }\\ 
     25.&\texttt{\ \{\textbf{if} possible,                                                                                        }\\ 
     26.&\texttt{\ \ \ split $G_i$ over as a single threaded task over multiple bins                                              }\\ 
      ~ &\texttt{\ \ \ using C=D semi-partitioning;                                                                               }\\ 
     27.&\texttt{\ \ \textbf{else} //if we run out of usable bins                                                                 }\\ 
     28.&\texttt{\ \ \{if possible,                                                                                               }\\ 
     29.&\texttt{\ \ \ \ try to further split $G_i$ over existing clusters,                                                       }\\ 
      ~ &\texttt{\ \ \ \ as parallel gang threads, analogously as with heavy DAGs;                                                }\\ 
     30.&\texttt{\ \ \ \textbf{else}                                                                                              }\\ 
     31.&\texttt{\ \ \ \ \textbf{return} FAILURE;	                                                                              }\\ 
     32.&\texttt{\ \ \}                                                                                                           }\\   
     33.&\texttt{\ \}                                                                                                             }\\ 
     34.&\texttt{\}                                                                                                               }\\ 
     35.&\texttt{//this line is reached only when all DAGs have been assigned                                                     }\\ 
     36.&\texttt{\textbf{return} SUCCESS                                                                                          }\\ 
     \end{tabular}
     } 
     \caption{High-level pseudocode of the complete algorithm}
     \label{fig:pseudocode:small}
     \end{minipage}
     \end{flushleft}
     \end{figure}

\textbf{First pass:}
The first pass  of  algorithm
(lines 2--15 in Figure~\ref{fig:pseudocode:small}, and Figure~\ref{fig:pseudocode:detailed:part1} in detail) is akin to federated scheduling, 
with a couple of differences. The DAG tasks are considered in order, and for each "heavy" DAG task (i.e., with utilisation above 1) the algorithm
attempts to assign it to a separate, newly formed cluster; whereas, for "light" DAG tasks (i.e., with utilisation up to 1), it attempts to assign
them as sequential (i.e., non-parallel) tasks using First-Fit bin-packing on other processors, scheduled under EDF. The differences of this first
pass with federated scheduling are the following:

\begin{itemize}
\item 
Whenever the algorithm cannot manage to feasibly assign a DAG task (i.e., because there are not enough hitherto unused processors
to assign a light DAG task to/to form a new cluster for a heavy DAG task), it does not declare failure. Instead, it considers the next DAG task.
The skipped DAG task will be reconsidered in the second pass of the algorithm, using the semi-partitioned assignment technique.
\item
Whenever a new cluster is formed (i.e, to assign a heavy DAG task to), the size of the cluster under formation is determined, by default, 
on the basis of the DAG in consideration being scheduled according to a segmented and flattened schedule (as described in 
Section~\ref{sect:flattening}). More specifically, the size of the cluster is calculated according to the function 
feasibly\_max\_flatten (see Figure~\ref{fig:pseudocode:segment:maximum:flattening}). By comparison, federated scheduling uses the bound of
Equation~\ref{eq:graham:cluster:size}, under the assumption of work-conserving node scheduling. Our algorithm only uses 
Equation~\ref{eq:graham:cluster:size} for cluster sizing in the very rare cases that it would lead to a smaller cluster than the bound for a 
segmented and flattened schedule. In those rare cases, such a DAG task is assigned as a gang task whose threads' WCET is given by 
Equation~\ref{eq:graham} and the node scheduling inside them is work-conserving. 
\end{itemize}

In principle, there exist several options for the order in which to consider the DAG tasks for assignment.
Without loss of generality, in this paper (and in our experiments), we consider the DAGs in order of non-increasing deadline. 
This is because this ordering works well for C=D semi-partitioning, to which the second pass of our algorithm is broadly analogous.

\textbf{Second pass:}
In its second pass (lines 17-34 in Figure~\ref{fig:pseudocode:small}, and Figure~\ref{fig:pseudocode:detailed:part2} in detail), 
the algorithm attempts to assign the DAG tasks that were left unassigned during the first pass. 
The assignments (if successful) of those DAG tasks are performed in a semi-partitioned manner, meaning that their execution will be split across two or more
assignment targets (i.e., clusters or bins). The unassigned DAGs are considered in the same order as in the first pas.

When attempting to assign a \textbf{heavy} DAG task, the algorithm considers the existing clusters one by one as potential assignment targets, 
assigning a "piece" of the execution of DAG task in each cluster, until the DAG task is fully assigned. There exist various options,
for the order in which to consider the clusters. In this paper (and in our experiments), it is by order of non-increasing 
\textit{normalised gross utilisation}, which we define as the total density of all gang tasks assigned to the cluster (encapsulating gang 
tasks or pieces thereof) divided by the cluster size. This mirrors the "fullest-to-emptiest" processor ordering that usually works well for 
semi-partitioned assignments under C=D.

When attempting to assign a \textbf{light} DAG task in semi-partitioned manner, the algorithm first considers the (already populated) bins as 
potential assignment targets, whereupon, the DAG task under assignment is scheduled as a single-threaded task (i.e., with its nodes executed sequentially).
This arrangement amounts, essentially, to the original C=D algorithm, for non-parallel tasks. However, if we run out of bins to consider (which can also happen 
halfway through the splitting of a DAG task), then the algorithm also considers the existing clusters as potential assignment targets. When a piece of a light
DAG task is assigned to a given cluster, it is scheduled there as a parallel task, as in the case of heavy DAG tasks. That is, as a gang task with the same 
degree of parallelism as the target cluster, encapsulating the corresponding segmented and flattened schedule. In our experiments, bins are considered in order
of non-increasing utilisation, and clusters are considered (if necessary) in order of non-increasing normalised gross utilisation.

The algorithm declares failure if some DAG task cannot be feasibly split. Otherwise, once all DAG tasks have been assigned, success is declared.

\subsection{Additional discussion}

Some details, in the interests of accurately documenting our algorithm, as configured for our experiments:

For the purposes of sensitivity analysis, the exact dbf-based analysis (Equation~\ref{eq:sensitivity:analysis:dbf}) can be employed. 
However, in our experiments in this paper, we opt instead for the slightly pessimistic but low-complexity analysis of Augusto, as restated for
constrained deadlines (Equation~\ref{eq:augusto:3:constrained}), in order to run the experiments quickly.

Whenever a DAG task is split according to C=D semantics, the clusters where its zero-laxity pieces (i.e., all pieces except possibly the last one)
are assigned are not considered any more as potential assignment targets of additional DAG task pieces (lines 58 and 76 in 
Figure~\ref{fig:pseudocode:detailed:part2}, resp. for heavy and light DAG tasks). This is because any uniprocessor system with two or more 
zero-laxity tasks cannot be feasible; the argument extends to a processor cluster with gang tasks with a degree of parallelism equal to the 
cluster size. Although they might theoretically still be feasible assignment targets for non-zero-laxity final pieces of DAG tasks, in practice 
there is little to be gained, so we forego this option in the interests of simplicity.

In line~54 of Figure~\ref{fig:pseudocode:detailed:part2}, there is a check of whether the response time of a heavy DAG task $G_i$ being split
has, by that point in its schedule, exceeded its deadline of $D_i$. In the original, non-parallel, C=D scheduling algorithm there is no need
for such check every time a zero-laxity piece is instantiated, because the fact that such pieces always execute at the highest-priority, suffering 
no interference, suffices to rule out deadline violations. However, with parallel tasks, even without interference, deadline violations while splitting
may result from insufficient parallelisation in the semi-partitioned schedule. For example, consider a DAG task consisting of 3 parallel nodes, all 
with a WCET of 20, with the DAG task's deadline being $20+\epsilon$. If, upon splitting this DAG task, we keep assigning its pieces to clusters of 
just two processors, eventually its deadline will be violated before its semi-partitioned schedule is completed. For light DAG tasks, there is no such
check, because (as with non-parallel C=D) even without any parallel execution, such tasks can always meet their deadlines, in the absence of interference.

     \begin{figure}[H]
     \begin{flushleft}
     \begin{minipage}[t]{8.5cm}  
     \centering
     \footnotesize{
     \begin{tabular}{rp{8cm}}
      1.&\texttt{int q=0; //counts clusters                                                                         }\\
      2.&\texttt{int r=0; //counts populated bins                                                                   }\\
      3.&\texttt{int M\_empty=m; //counts empty processors                                                          }\\
      4.&\texttt{                                                                                                   }\\ 
      5.&\texttt{//first pass: assignments without splitting                                                        }\\   
      6.&\texttt{\textbf{for} (int i=1 to n) //DAGs indexed in order of non-increasing $T_i$                        }\\ 
      7.&\texttt{\{\textbf{if} ($U_i$ >1) //heavy DAG                                                               }\\ 
      8.&\texttt{\  \{int $m^{\prime}$=cluster\_size\_requirements($G_i$);                                          }\\ 
      9.&\texttt{\ \  if ($m^{\prime}$<=M\_empty)                                                                   }\\
     10.&\texttt{\ \ \ \{q=q+1;                                                                                     }\\
     11.&\texttt{\ \ \ cluster $Q_q$=new cluster($m^{\prime}$);                                                     }\\ 
     12.&\texttt{\ \ \ assign($G_i$, $Q_q$); //as a gang task                                                       }\\   
     13.&\texttt{\ \ \ M\_empty=M\_empty-$m^{\prime}$;                                                              }\\ 
     14.&\texttt{\ \ \}                                                                                             }\\ 
     15.&\texttt{\ \ //else, leave for 2nd pass, to try and assign with splitting                                   }\\ 
     16.&\texttt{\ \}                                                                                               }\\ 
     17.&\texttt{\ \textbf{else} //$U_i$<=1; light DAG                                                              }\\ 
     18.&\texttt{\ \{boolean ff=assign\_first\_fit($G_i$, $b_1$ to $b_r$); //density-based test                     }\\
     19.&\texttt{\ \ \textbf{if} (ff==TRUE) //success                                                               }\\ 
     20.&\texttt{\ \ \ \textbf{break};                                                                              }\\ 
     21.&\texttt{\ \ \textbf{else if} (M\_empty>0)                                                                  }\\ 
     22.&\texttt{\ \ \{r=r+1;                                                                                       }\\   
     23.&\texttt{\ \ \ $b_r$=new bin(); //a bin is a single-processor cluster                                       }\\ 
     24.&\texttt{\ \ \ assign($G_i$, $b_r$); //as single-threaded task                                              }\\ 
     25.&\texttt{\ \ \ M\_empty=M\_empty-1;                                                                         }\\ 
     26.&\texttt{\ \ \}                                                                                             }\\ 
     27.&\texttt{\ \ //else leave for 2nd pass, to try and assign with splitting                                    }\\ 
     28.&\texttt{\ \}                                                                                               }\\ 
     29.&\texttt{\}                                                                                                 }\\ 
     \end{tabular}
     } 
     \caption{Detailed pseudocode - 1st pass of the  algorithm}
     \label{fig:pseudocode:detailed:part1}
     \end{minipage}
     \end{flushleft}
     \end{figure}

   \begin{figure*}[htbp]
     \begin{flushleft}
     \begin{minipage}[t]{18cm}  
     \centering
     \footnotesize{
     \begin{tabular}{rp{8.5cm}rp{8cm}}
     30.&\texttt{//second pass: assignments with splitting                                                         }&  63. &\texttt{\ \textbf{else} //$U_i$<=1; $G_i$ is light                                               }\\
     31.&\texttt{\textbf{for} (int i=1 to n, where $G_i$ is still unassigned)                                      }&  64. &\texttt{\ \{int C=$W_i$;                                                                         }\\
     32.&\texttt{\{list OQ = list of ($Q_1$,$\ldots$, $Q_q$) with $Q_q$.closed!=TRUE,                              }&  65. &\texttt{\ \ int D=$D_i$;                                                                         }\\
      ~ &\texttt{\ \ \ \ \ \ \ \ \ \ ordered by non-increasing normalised gross density;                           }&  66. &\texttt{\ \ b=Ob.next();                                                                         }\\
     33.&\texttt{\ list Ob = list of ($b_1$, ..., $b_r$) with $b_q$.closed!=TRUE,                                  }&  67. &\texttt{\ \ \textbf{while} (TRUE))                                                               }\\ 
      ~ &\texttt{\ \ \ \ \ \ \ \ \ \ ordered by non-increasing density;                                            }&  68. &\texttt{\ \ \{\textbf{if} (b!=NULL) //if eligible bins exist                                     }\\   
     34.&\texttt{\ \textbf{if} ($U_i$>1) //$G_i$ is heavy                                                          }&  69. &\texttt{\ \ \ \{\textbf{if} (uniprocessor\_EDF\_schedulable($\Gamma_b$$\cup$ task(C, D, $T_i$))) }\\ 
     35.&\texttt{\ \{DAG G=$G_i$;                                                                                  }&  70. &\texttt{\ \ \ \ \{assign(FS, Q);                                                                 }\\ 
     36.&\texttt{\ \ int D=$D_i$;                                                                                  }&  71. &\texttt{\ \ \ \ \ \textbf{break}; //$G_i$ entirely assigned                                      }\\ 
     37.&\texttt{\ \ int split\_schedule\_length=0;                                                                }&  72. &\texttt{\ \ \ \ \}                                                                               }\\
     38.&\texttt{\ \ \textbf{while} (TRUE)                                                                         }&  73. &\texttt{\ \ \ \ //splitting ensues                                                               }\\
     39.&\texttt{\ \ \{Q=OQ.next();                                                                                }&  74. &\texttt{\ \ \ \ int piece\_C = C\_equals\_D\_sensitivity\_analysis(Q);                           }\\ 
     40.&\texttt{\ \ \ \textbf{if} (Q==NULL) //run out of clusters                                                 }&  75. &\texttt{\ \ \ \ assign($G_i$, b, piece\_C); //C=D=piece\_C; T=$T_i$                              }\\   
     41.&\texttt{\ \ \ \ \textbf{return FAILURE};                                                                  }&  76. &\texttt{\ \ \ \ b.closed = TRUE; //no more assignments there;                                    }\\ 
     42.&\texttt{\ \ \ schedule FS=new schedule();                                                                 }&  77. &\texttt{\ \ \ \ C = C-piece\_C;                                                                  }\\ 
     43.&\texttt{\ \ \ int $m^{\prime}$=Q.number\_of\_processors();                                                }&  78. &\texttt{\ \ \ \ D = D-piece\_C;                                                                  }\\ 
     44.&\texttt{\ \ \ \textbf{for} (segment $S_k$ in G, in order of ascending k)                                  }&  79. &\texttt{\ \ \ \}                                                                                 }\\ 
     45.&\texttt{\ \ \ \ FS.append(flatten\_segment($S_k$), $m^{\prime}$);                                         }&  80. &\texttt{\ \ \ \textbf{else} //try to assign as parallel task on some cluster;                    }\\ 
     46.&\texttt{\ \ \ //try to avoid further splitting of 2nd, 3rd etc piece                                      }&  81. &\texttt{\ \ \ \{Analogously as with heavy DAGs over lines 35--62;                                }\\
     47.&\texttt{\ \ \ \textbf{if} (uniprocessor\_EDF\_schedulable($\Gamma_G$$\cup$task(FS.length, D, $T_i$)))     }&  82. &\texttt{\ \ \ \}                                                                                 }\\ 
     48.&\texttt{\ \ \ \{assign(FS, Q);                                                                            }&  83. &\texttt{\ \ \}                                                                                   }\\ 
     49.&\texttt{\ \ \ \ \textbf{break}; //$G_i$ entirely assigned                                                 }&  84. &\texttt{\ \}                                                                                     }\\ 
     50.&\texttt{\ \ \ \}                                                                                          }&  85. &\texttt{\}                                                                                       }\\   
     51.&\texttt{\ \ \ //splitting ensues; new zero-laxity piece formed below                                      }&  86. &\texttt{//This line is only reached if all the DAGs have been assigned.                          }\\ 
     52.&\texttt{\ \ \ int piece\_C = C\_equals\_D\_sensitivity\_analysis(Q);                                      }&  87. &\texttt{\textbf{return} SUCCESS;                                                                 }\\ 
     53.&\texttt{\ \ \ split\_schedule\_length=split\_schedule\_length+piece\_C;                                   }&      &\texttt{                                                                                         }\\ 
     54.&\texttt{\ \ \ \textbf{if} (split\_schedule\_length>=$D_i$)\,//deadline reached w/o completion?            }&      &\texttt{                                                                                         }\\ 
     55.&\texttt{\ \ \ \ \textbf{return} FAILURE;                                                                  }&      &\texttt{                                                                                         }\\ 
     56.&\texttt{\ \ \ piece\_FS=FS.interval(0, piece\_C);\,//first piece\_C time units of FS                      }&      &\texttt{                                                                                         }\\ 
     57.&\texttt{\ \ \ assign(piece\_FS, Q); //with deadline=piece\_C and period $T_i$                             }&      &\texttt{                                                                                         }\\ 
     58.&\texttt{\ \ \ Q.closed = TRUE; //no more assignments there;                                               }&      &\texttt{                                                                                         }\\ 
     59.&\texttt{\ \ \ G = leftover\_DAG(G,piece\_C); //rump DAG remaining after removal                           }&      &\texttt{                                                                                         }\\ 
      ~.&\texttt{\ \ \ \ \ \ \ \ \ \ \ \ \ \ \ \ \ \ \ \ \ \ \ \ \ \ \ \ \ \ \ \ //of execution in piece\_FS       }&      &\texttt{                                                                                         }\\ 
     60.&\texttt{\ \ \ D = D-piece\_C;               //rump DAG's deadline; the period is $T_i$                    }&      &\texttt{                                                                                         }\\   
     61.&\texttt{\ \ \}                                                                                            }&      &\texttt{                                                                                         }\\ 
     62.&\texttt{\ \}                                                                                              }&      &\texttt{                                                                                         }\\ 
     \end{tabular}
     } 
     \caption{Detailed pseudocode - 2nd pass of the  algorithm}
     \label{fig:pseudocode:detailed:part2}
     \end{minipage}
     \end{flushleft}
     \end{figure*}

\subsection{Conceptual comparisons with other work}

Conceptually, there are some analogies between our approach and the bundled gang scheduling approach by 
Wasly and Pellizzoni~\cite{Wasly_Pellizzoni_19}, but also significant differences.

The segments into which our approach breaks up a DAG are broadly analogous to Wasly's "bundles", in that, regardless of the algorithm for 
forming segments and bundles, in both cases they are populated by nodes that can (in our approach) or \textit{should} (in Wasly's approach) 
be executed in parallel to each other. Another similarity is that, under both approaches, the degree of parallelism with which a task is 
scheduled may vary at different points during its execution. However, whereas, under bundled gang scheduling, each node in a bundle gets its own
dedicated thread, under Segmented-Flattened-and-Split scheduling this is typically not the case. Instead, McNaughton's algorithm is used, to match
the number of threads to the number of processors in the target cluster. Another difference is that the degree of parallelism (thread count), under 
bundled gang scheduling, is (only) modified at the bundle boundaries; whereas, under our approach, the parallel schedule can be split at 
arbitrary points and resumed, possibly with a different degree of parallelism, on a different cluster. The migration points in the schedule 
can then be selected according to sensitivity analysis, in order to utilise as much as possible of a given cluster's available processing capacity.

Both our approach and that of Wasly and Pellizzoni make use of the gang scheduling abstraction in their formulations. However, in our case, 
the underlying scheduling analysis, as in C=D semi-partitioning, is that for uniprocessor EDF. This is because, on each cluster, all DAG tasks assigned 
there (either in their entirety or as pieces of DAG tasks that are split) have the same number of threads, that matches the size of the cluster.
This simplifies the analysis, eliminates the possibility of 2D-blocking, and also allows for independently testing for the feasibility of each cluster. 
Conversely, Wasly and Pellizzoni genuinely use a gang scheduling policy, and this carries on to their analysis.

In any case, it is important to stress that Wasly and Pellizzoni's scheduling algorithm design was primarily motivated by parallel applications 
for which it is of utmost importance for certain subtasks (nodes) to be executed in lockstep, or else severe performance penalties can arise, e.g., 
due to the need for synchronisation. In such a context, their design choices make good sense, and, e.g., the use of McNaughton's wraparound algorithm 
would have been counter-productive. In contrast, our work makes no such assumption about the application; we merely consider the DAG task deadlines and 
the precedence constraints among nodes and try to optimise our algorithm for scheduling performance under those constraints. 
This is important context for the scope of any comparison.

\section{Evaluation}
In this section, we provide evaluation results for our proposed scheduling algorithm using randomly generated synthetic DAG tasksets. 
By applying the respective schedulability tests to the synthetic workloads, we compare our proposed scheduling algorithm
with federated scheduling~\cite{Li_CALGS_14} in terms of scheduling success ratio.

\subsection{Experimental setup}
For our experiments, we created synthetic workloads using the random DAG generator~\cite{Dai_22} by Xiaotian Dai from the University of York.
The random DAG task set generation process has two parts. First, the "macroscopic" attributes of each DAG task in the set, i.e., 
its utilisation, its interarrival time and its deadline, are generated. Subsequently, the tool randomly generates, for each DAG task, 
a number of nodes and edges among (some of) them.

For the first part, the generator uses the UUnifast-Discard method \cite{Davis_IB_09}. It requires two inputs: the total utilisation of the DAG task set 
to be generated and the number of DAG tasks within it. Based on those, it randomly generates uniformly distributed utilisations for the different 
DAG tasks. Once the utilisation ($U_i$) of each DAG task is determined, its interarrival time ($T_i$) is randomly chosen from a predefined list 
of possible period values; the defaults for those, also used in our experiments are: 100, 200, 500, 1000, 2000 and 5000. 
From the utilisation and the interarrival time of each DAG task, its workload is then computed as $W_i = T_i \cdot U_i$. In our experiments, 
we considered implicit-deadline DAG tasks, so the deadline of each DAG task is equal to its interarrival time. 

Once the above attributes are defined for each DAG, for the second part of the DAG task generation process,
the internal structure of each DAG is progressively generated in \textit{layers}. 
A layer of a DAG is defined as the collection of nodes that share the same \textit{minimum} hop distance from the source node, over all 
paths. (Note that this is different to our definition of a segment, in Section~\ref{sect:segmentation}, which is based on a node's 
\textit{maximum} hop distance from the source node, over all paths.)
The total number of layers in a DAG is randomly chosen from 4 to 10. The number of generated nodes in each layer is uniformly distributed 
from 2 to 5. The DAG generation tool randomly adds edges between nodes; we went with the default settings/probabilities for that~\cite{Zhao_DB_22}.
The already determined workload $W_i$ of each DAG is then distributed randomly amongst the nodes generated. 
We slightly modified the code to ensure that "dummy" source and sink nodes, with a WCET of 0 are added to the generated DAG, for our convenience.

Our experiments considered platforms with $m=8$ or $16$ identical processors, and, in each case, 
with $n=10$ or $20$ DAG tasks per DAG task set.
In the plots, the horizontal axis ($U$) represents the DAG task set's normalised utilisation, i.e., the fraction of the platform capacity that 
the DAG task set utilisation amounts to. For example, for a platform with 8 processors, a data point with $U=85\%$ represents DAG task sets 
that have a utilisation of $8 \cdot 0.85=6.8$. For every value of $U$ from 5\% to 100\%, with a step size of 5\%, we generated 100 DAG task sets 
for each experiment/combination of $m$ and $n$. The vertical axis represents the fraction of task sets of a given utilisation that were found schedulable
by the respective test.
Table~\ref{tab:experimental:setup} summarises the combinations of parameters in our experiments. 

\begin{table}[H]
\begin{center}
\begin{small}
\begin{tabular}{|c|l|l|}
\hline
$m$ & number of processors                 & \{8, 16\}                  \\ 
\hline
$n$ & number of DAG tasks per set          & \{10, 20\}                 \\  
\hline
$U$ & normalised DAG task set utilisation  & 5\% to 100\%, in 5\% steps \\  
\hline
\end{tabular}
\caption{\label{tab:experimental:setup}Parameter combinations used in the experiments.}
\end{small}
\end{center}
\end{table} 

\subsection{Results}
Figure \ref{fig:results} shows the comparison between our Segmented-Flattened-and-Split scheduling approach (denoted "SFS" in the figure) 
and Federated Scheduling approach (denoted by "FS"), when there are 10 DAGs in each DAG task set.
For each DAG task set, we apply the schedulability test for Segmented-Flattened-and-Split scheduling and for Federated Scheduling 
to determine its schedulability under each approach. In our experiments, Federated Scheduling uses First-Fit bin-packing for light tasks,
in conjunction with a utilisation-based test on each processor, which is exact, because the DAG tasks are implicit-deadline.

\begin{figure}[t]
    \centerline{\includegraphics[scale=0.32, trim=1.5cm 0cm 1cm 1.5cm, clip]{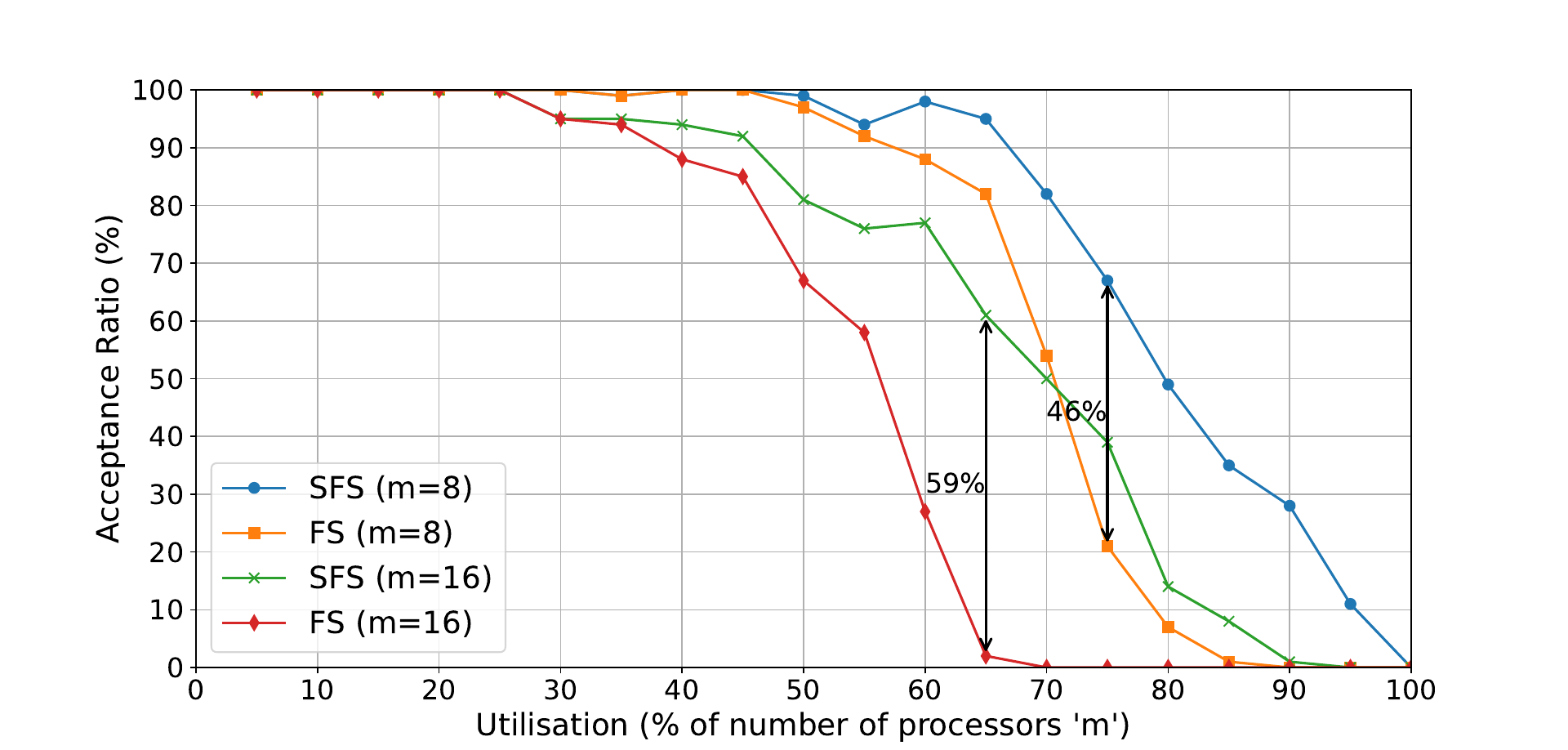}}
    \caption{Evaluation Results: Segmented-Flattened-and-Split vs. Federated Scheduling for 10 DAGs in a task set}
    \label{fig:results}
\end{figure}

\begin{figure}[t]
    \centerline{\includegraphics[scale=0.32, trim=1.5cm 0cm 1cm 1.5cm, clip]{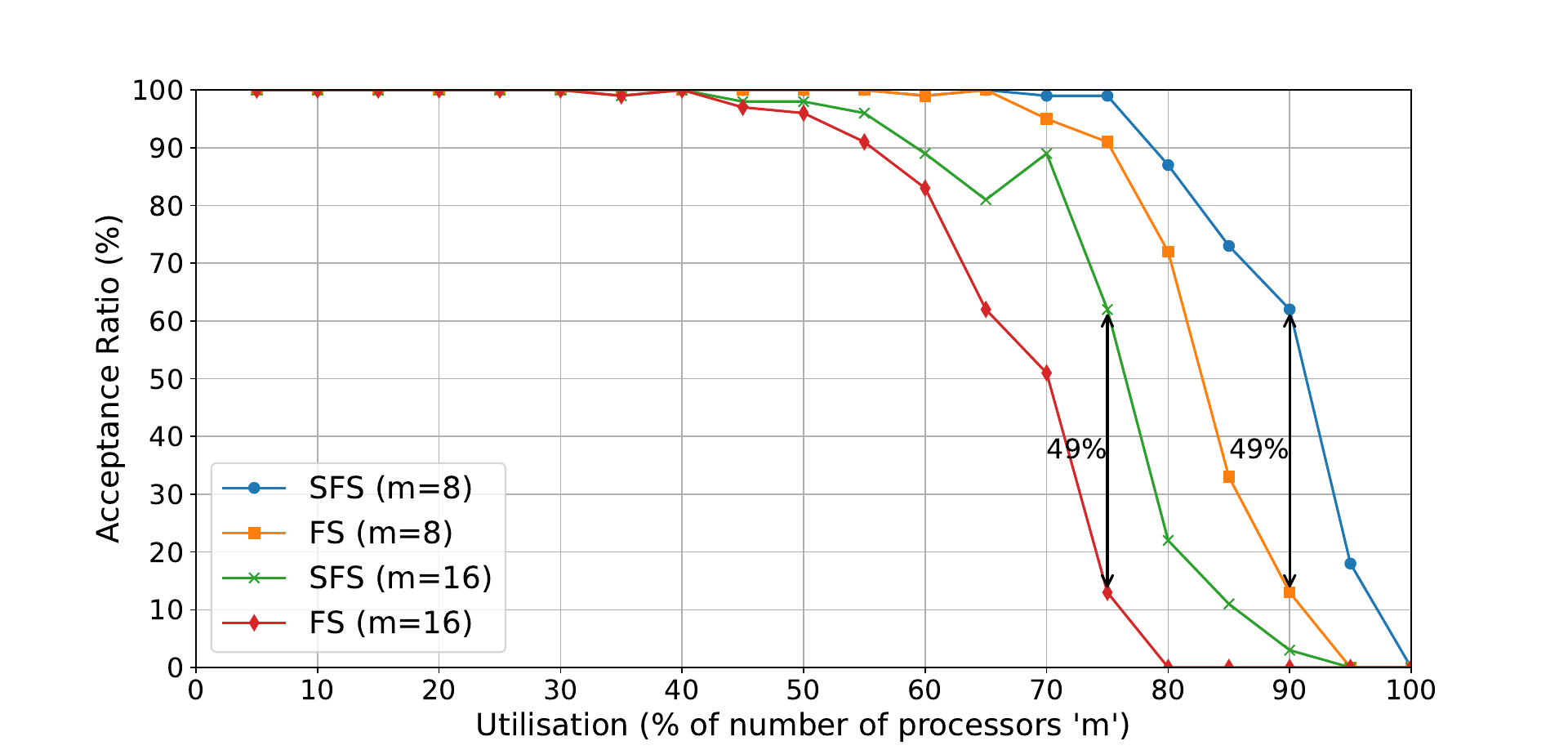}}
    \caption{Evaluation Results: Segmented-Flattened-and-Split vs. Federated Scheduling for 20 DAGs in a task set}
    \label{fig:results1} 
\end{figure}

Our Segmented-Flattened-and-Split scheduling provides a higher acceptance ratio than Federated Scheduling, both for $m=8$ and for $m=16$ processors.
The maximum vertical distance between the acceptance ratio curves of the two approaches is 46\% and 59\%, for 8 and 16 processors respectively, 
as shown in Figure \ref{fig:results}.

One can also see that the acceptance ratio of both scheduling algorithms in these experiments decreases significantly as the number of processors 
increases. The reason is that increasing the number of cores while keeping both the normalised utilisation of the DAG task set and the number of 
DAG tasks in it fixed, during the synthetic workload generation process, leads to an increase in the average DAG task utilisation. This, in turn, 
makes the number of heavy DAG tasks in the generated DAG task set much higher on average. 
To illustrate, when the normalised utilisation is 70\%
the average DAG task utilisation in Figure~\ref{fig:results} is $0.7 \cdot 8 / 10 = 0.56$ when $m=8$ and $0.7 \cdot 16 / 10 = 1.12$ when $m=16$.
However, the average number of heavy DAG tasks per DAG task set is 1.72 (out of 10) when $m=8$ and 5.84 when $m=16$ -- a supralinear increase.
And it is precisely those DAG tasks that are hard to schedule efficiently, in terms of utilisation of processing capacity, compared to light DAG tasks
for which efficient classic algorithms for non-parallel tasks can be used. As seen in Figure~\ref{fig:results}, 
Segmented-Flattened-and-Split scheduling is considerably less sensitive to a higher fraction of heavy DAG tasks, than Federated Scheduling.
This is the combined effect of both the use of flattened schedules and the semi-partitioning.

Figure \ref{fig:results1} shows the results of the experiments when the number of DAGs in each DAG task set is increased to 20.
Our algorithm once again consistently outperforms Federated Scheduling and the same observations apply. The maximum vertical distance between 
the acceptance ratio curves is 49\%, both for 8 and for 16 processors. Again, for the reasons explained earlier, both algorithms perform worse 
when the number of processors is higher, all other things remaining equal.

\section{Conclusion}
In this paper, we proposed a new algorithm (Segmented-Flattened-and-Split scheduling) for efficiently scheduling parallel DAG tasks on identical multiprocessors.
The building blocks of our algorithm include the \textit{(i)~segmentation} of DAG into disjoint sub-DAGs which can be executed in sequence. 
Then, \textit{(ii)~flattening} the segmented DAG using McNaughton's wraparound rule. 
Finally, \textit{(iii)~splitting} off a flattened schedule, inspired by C=D semi partitioning, over disjoint processor clusters.

Our algorithm assigns DAG tasks in two passes. The first pass is analogous to federated scheduling with a couple of differences, namely 
that cluster size requirements are determined on the basis of a flattened schedule and that, whenever a DAG task cannot be assigned, 
it is skipped, in order to be considered in the second pass. In the second pass, a technique inspired by C=D semi-partitioning
is used, to split the execution of unassigned DAG tasks onto different existing clusters, of potentially different sizes.

In our experiments with synthetic DAG task sets, we compared the scheduling performance of the proposed algorithm, in terms of acceptance ratio, 
with that of federated scheduling. The results show a consistently significant performance advantage. We expect that, e.g, with further refinement
of the DAG task assignment heuristics (i.e., the order in which tasks and clusters are being considered) and with the use of exact EDF schedulability 
analysis, the performance potential of our approach can be even greater, and would like to explore that in future work. We also intend to adapt
Segmented-Flattened-and-Split scheduling for the case of heterogeneous multiprocessor scheduling with unrelated processor types. 
 

\bibliographystyle{ACM-Reference-Format}
\bibliography{references}


\begin{thebibliography}{00}


\ifx \showCODEN    \undefined \def \showCODEN     #1{\unskip}     \fi
\ifx \showDOI      \undefined \def \showDOI       #1{#1}\fi
\ifx \showISBNx    \undefined \def \showISBNx     #1{\unskip}     \fi
\ifx \showISBNxiii \undefined \def \showISBNxiii  #1{\unskip}     \fi
\ifx \showISSN     \undefined \def \showISSN      #1{\unskip}     \fi
\ifx \showLCCN     \undefined \def \showLCCN      #1{\unskip}     \fi
\ifx \shownote     \undefined \def \shownote      #1{#1}          \fi
\ifx \showarticletitle \undefined \def \showarticletitle #1{#1}   \fi
\ifx \showURL      \undefined \def \showURL       {\relax}        \fi
\providecommand\bibfield[2]{#2}
\providecommand\bibinfo[2]{#2}
\providecommand\natexlab[1]{#1}
\providecommand\showeprint[2][]{arXiv:#2}

\bibitem[\protect\citeauthoryear{Ahmed and Anderson}{Ahmed and
  Anderson}{2022}]%
        {Ahmed_Anderson_22}
\bibfield{author}{\bibinfo{person}{Shareef Ahmed} {and}
  \bibinfo{person}{James~H. Anderson}.} \bibinfo{year}{2022}\natexlab{}.
\newblock \showarticletitle{Exact Response-Time Bounds of Periodic {DAG} Tasks
  under Server-Based Global Scheduling}. In \bibinfo{booktitle}{{\em {Proc.
  43rd IEEE Real-Time Systems Symposium (RTSS)}}}. \bibinfo{pages}{447--459}.
\newblock


\bibitem[\protect\citeauthoryear{Baruah}{Baruah}{2015a}]%
        {Baruah_15b}
\bibfield{author}{\bibinfo{person}{Sanjoy Baruah}.}
  \bibinfo{year}{2015}\natexlab{a}.
\newblock \showarticletitle{The federated scheduling of constrained-deadline
  sporadic DAG task systems}. In \bibinfo{booktitle}{{\em 2015 Design,
  Automation \& Test in Europe Conference \& Exhibition (DATE)}}. IEEE,
  \bibinfo{pages}{1323--1328}.
\newblock


\bibitem[\protect\citeauthoryear{Baruah}{Baruah}{2015b}]%
        {Baruah_15}
\bibfield{author}{\bibinfo{person}{Sanjoy Baruah}.}
  \bibinfo{year}{2015}\natexlab{b}.
\newblock \showarticletitle{Federated scheduling of sporadic {DAG} task
  systems}. In \bibinfo{booktitle}{{\em 2015 IEEE International Parallel and
  Distributed Processing Symposium}}. IEEE, \bibinfo{pages}{179--186}.
\newblock


\bibitem[\protect\citeauthoryear{Baruah, Bonifaci, {Marchetti-Spaccamela},
  Stougie, and Wiese}{Baruah et~al\mbox{.}}{2012}]%
        {Baruah_BMSW_12}
\bibfield{author}{\bibinfo{person}{S.~K. Baruah}, \bibinfo{person}{V.
  Bonifaci}, \bibinfo{person}{A. {Marchetti-Spaccamela}}, \bibinfo{person}{L.
  Stougie}, {and} \bibinfo{person}{A. Wiese}.} \bibinfo{year}{2012}\natexlab{}.
\newblock \showarticletitle{{A generalized parallel task model for recurrent
  real-time processes}}. In \bibinfo{booktitle}{{\em {Proc. 33rd IEEE Real-Time
  Systems Symposium (RTSS)}}}. \bibinfo{pages}{63--72}.
\newblock


\bibitem[\protect\citeauthoryear{Baruah and Burns}{Baruah and Burns}{2006}]%
        {Baruah_Burns_06}
\bibfield{author}{\bibinfo{person}{Sanjoy~K. Baruah} {and}
  \bibinfo{person}{Alan Burns}.} \bibinfo{year}{2006}\natexlab{}.
\newblock \showarticletitle{Sustainable Scheduling Analysis}. In
  \bibinfo{booktitle}{{\em Proc. 27th {IEEE} Real-Time Systems Symposium
  (RTSS)}}. \bibinfo{pages}{159--168}.
\newblock


\bibitem[\protect\citeauthoryear{Baruah, Mok, and Rosier}{Baruah
  et~al\mbox{.}}{1990}]%
        {Baruah_MR_90}
\bibfield{author}{\bibinfo{person}{S.~K. Baruah}, \bibinfo{person}{A.~K. Mok},
  {and} \bibinfo{person}{L.~E. Rosier}.} \bibinfo{year}{1990}\natexlab{}.
\newblock \showarticletitle{{Preemptively scheduling hard-real-time sporadic
  tasks on one processor}}. In \bibinfo{booktitle}{{\em Proc. 11th IEEE
  Real-Time Systems Symposium (RTSS)}}. \bibinfo{pages}{182--190}.
\newblock


\bibitem[\protect\citeauthoryear{Bhuiyan, Guo, Saifullah, Guan, and
  Xiong}{Bhuiyan et~al\mbox{.}}{2018}]%
        {Bhuiyan_GSGX_18}
\bibfield{author}{\bibinfo{person}{Ashikahmed Bhuiyan},
  \bibinfo{person}{Zhishan Guo}, \bibinfo{person}{Abusayeed Saifullah},
  \bibinfo{person}{Nan Guan}, {and} \bibinfo{person}{Haoyi Xiong}.}
  \bibinfo{year}{2018}\natexlab{}.
\newblock \showarticletitle{Energy-efficient real-time scheduling of DAG
  tasks}.
\newblock \bibinfo{journal}{{\em ACM Transactions on Embedded Computing Systems
  (TECS)\/}} \bibinfo{volume}{17}, \bibinfo{number}{5} (\bibinfo{year}{2018}),
  \bibinfo{pages}{1--25}.
\newblock


\bibitem[\protect\citeauthoryear{Bonifaci, Marchetti-Spaccamela, Stiller, and
  Wiese}{Bonifaci et~al\mbox{.}}{2013}]%
        {Bonifaci_MSW_13}
\bibfield{author}{\bibinfo{person}{V. Bonifaci}, \bibinfo{person}{A.
  Marchetti-Spaccamela}, \bibinfo{person}{S. Stiller}, {and}
  \bibinfo{person}{A. Wiese}.} \bibinfo{year}{2013}\natexlab{}.
\newblock \showarticletitle{{Feasibility Analysis in the Sporadic DAG Task
  Model}}. In \bibinfo{booktitle}{{\em {25th Euromicro Conference on Real-Time
  Systems (ECRTS)}}}. \bibinfo{pages}{225--233}.
\newblock


\bibitem[\protect\citeauthoryear{Brandenburg and G{\"{u}}l}{Brandenburg and
  G{\"{u}}l}{2016}]%
        {Brandenburg_Gul_16}
\bibfield{author}{\bibinfo{person}{B.~B. Brandenburg} {and}
  \bibinfo{person}{Mahircan G{\"{u}}l}.} \bibinfo{year}{2016}\natexlab{}.
\newblock \showarticletitle{Global Scheduling Not Required: Simple,
  Near-Optimal Multiprocessor Real-Time Scheduling with Semi-Partitioned
  Reservations}. In \bibinfo{booktitle}{{\em {Proc. 37th IEEE Real-Time Systems
  Symposium (RTSS)}}}. \bibinfo{pages}{99--110}.
\newblock


\bibitem[\protect\citeauthoryear{Burns, Davis, Wang, and Zhang}{Burns
  et~al\mbox{.}}{2011}]%
        {Burns_DWZ_12}
\bibfield{author}{\bibinfo{person}{A. Burns}, \bibinfo{person}{R.I. Davis},
  \bibinfo{person}{P. Wang}, {and} \bibinfo{person}{F. Zhang}.}
  \bibinfo{year}{2011}\natexlab{}.
\newblock \showarticletitle{{Partitioned EDF Scheduling for Multiprocessors
  using a C=D Scheme}}.
\newblock \bibinfo{journal}{{\em Real-Time Systems\/}} \bibinfo{volume}{48},
  \bibinfo{number}{1} (\bibinfo{year}{2011}), \bibinfo{pages}{3--33}.
\newblock


\bibitem[\protect\citeauthoryear{Cao and Bian}{Cao and Bian}{2020}]%
        {Cao_Bian_20}
\bibfield{author}{\bibinfo{person}{S. Cao} {and} \bibinfo{person}{J. Bian}.}
  \bibinfo{year}{2020}\natexlab{}.
\newblock \showarticletitle{{Improved DAG Tasks Stretching Algorithm Based on
  Multi-core Processors}}. In \bibinfo{booktitle}{{\em {Proc. 11th IEEE
  International Conference on Software Engineering and Service Science
  (ICSESS)}}}. \bibinfo{pages}{18–21}.
\newblock
\showDOI{%
\url{https://doi.org/10.1109/ICSESS49938.2020.9237693}}


\bibitem[\protect\citeauthoryear{Dai}{Dai}{2022}]%
        {Dai_22}
\bibfield{author}{\bibinfo{person}{Xiaotian Dai}.}
  \bibinfo{year}{2022}\natexlab{}.
\newblock \bibinfo{title}{{dag-gen-rnd: A randomized multi-DAG task generator
  for scheduling and allocation research}}.
\newblock \bibinfo{howpublished}{Zenodo}.   (\bibinfo{date}{March}
  \bibinfo{year}{2022}).
\newblock
\showDOI{%
\url{https://doi.org/10.5281/zenodo.6334205}}
\newblock
\shownote{v0.1.}


\bibitem[\protect\citeauthoryear{Davis and Burns}{Davis and Burns}{2009}]%
        {Davis_IB_09}
\bibfield{author}{\bibinfo{person}{Robert~I Davis} {and} \bibinfo{person}{Alan
  Burns}.} \bibinfo{year}{2009}\natexlab{}.
\newblock \showarticletitle{Priority assignment for global fixed priority
  pre-emptive scheduling in multiprocessor real-time systems}. In
  \bibinfo{booktitle}{{\em 2009 30th IEEE Real-Time Systems Symposium}}. IEEE,
  \bibinfo{pages}{398--409}.
\newblock


\bibitem[\protect\citeauthoryear{Graham}{Graham}{1969}]%
        {Graham_69}
\bibfield{author}{\bibinfo{person}{R.~L. Graham}.}
  \bibinfo{year}{1969}\natexlab{}.
\newblock \showarticletitle{Bounds on multiprocessing timing anomalies}.
\newblock \bibinfo{journal}{{\em {SIAM} {J}ournal on {A}pplied
  {M}athematics\/}} \bibinfo{volume}{17}, \bibinfo{number}{2}
  (\bibinfo{year}{1969}), \bibinfo{pages}{416--429}.
\newblock


\bibitem[\protect\citeauthoryear{Guan, Peng, and Qiao}{Guan
  et~al\mbox{.}}{2023}]%
        {Guan_PQ_23}
\bibfield{author}{\bibinfo{person}{Fei Guan}, \bibinfo{person}{Long Peng},
  {and} \bibinfo{person}{Jiaqing Qiao}.} \bibinfo{year}{2023}\natexlab{}.
\newblock \showarticletitle{A New Federated Scheduling Algorithm for
  Arbitrary-Deadline {DAG} Tasks}.
\newblock \bibinfo{journal}{{\em {IEEE Transactions on Computers}\/}}
  \bibinfo{volume}{72}, \bibinfo{number}{8} (\bibinfo{year}{2023}),
  \bibinfo{pages}{2264--2277}.
\newblock
\showDOI{%
\url{https://doi.org/10.1109/TC.2023.3244632}}


\bibitem[\protect\citeauthoryear{Lakshmanan, Kato, and Rajkumar}{Lakshmanan
  et~al\mbox{.}}{2010}]%
        {Lakshmanan_KR_10}
\bibfield{author}{\bibinfo{person}{K. Lakshmanan}, \bibinfo{person}{S. Kato},
  {and} \bibinfo{person}{R. Rajkumar}.} \bibinfo{year}{2010}\natexlab{}.
\newblock \showarticletitle{{Scheduling Parallel Real-Time Tasks on Multi-core
  Processors}}. In \bibinfo{booktitle}{{\em {Proc. 31st IEEE Real-Time Systems
  Symposium (RTSS)}}}. \bibinfo{pages}{259–268}.
\newblock


\bibitem[\protect\citeauthoryear{Li, Agrawal, Lu, and Gill}{Li
  et~al\mbox{.}}{2013}]%
        {Li_ALG_13}
\bibfield{author}{\bibinfo{person}{J. Li}, \bibinfo{person}{K. Agrawal},
  \bibinfo{person}{C. Lu}, {and} \bibinfo{person}{C. Gill}.}
  \bibinfo{year}{2013}\natexlab{}.
\newblock \showarticletitle{{Analysis of Global EDF for Parallel Tasks}}. In
  \bibinfo{booktitle}{{\em {Proc. 25th Euromicro Conference on Real-Time
  Systems (ECRTS)}}}. \bibinfo{pages}{3--13}.
\newblock


\bibitem[\protect\citeauthoryear{Li, Chen, Agrawal, Lu, Gill, and Saifullah}{Li
  et~al\mbox{.}}{2014}]%
        {Li_CALGS_14}
\bibfield{author}{\bibinfo{person}{Jing Li}, \bibinfo{person}{Jian-Jia Chen},
  \bibinfo{person}{Kunal Agrawal}, \bibinfo{person}{Chenyang Lu},
  \bibinfo{person}{Christopher~D. Gill}, {and} \bibinfo{person}{Abusayeed
  Saifullah}.} \bibinfo{year}{2014}\natexlab{}.
\newblock \showarticletitle{{Analysis of Federated and Global Scheduling for
  Parallel Real-Time Tasks}}. In \bibinfo{booktitle}{{\em {Proc. 26th Euromicro
  Conference on Real-Time Systems (ECRTS)}}}. \bibinfo{pages}{85–96}.
\newblock


\bibitem[\protect\citeauthoryear{Qamhieh, George, and Midonnet}{Qamhieh
  et~al\mbox{.}}{2014}]%
        {Qamhieh_GM_14}
\bibfield{author}{\bibinfo{person}{M. Qamhieh}, \bibinfo{person}{L. George},
  {and} \bibinfo{person}{S. Midonnet}.} \bibinfo{year}{2014}\natexlab{}.
\newblock \showarticletitle{{A Stretching Algorithm for Parallel Real-time DAG
  Tasks on Multiprocessor Systems}}. In \bibinfo{booktitle}{{\em {Proc. 22nd
  International Conference on Real-Time Networks and Systems (RTNS)}}}.
  \bibinfo{pages}{13--22}.
\newblock
\showDOI{%
\url{https://doi.org/10.1145/2659787.2659818}}


\bibitem[\protect\citeauthoryear{{R. McNaughton}}{{R. McNaughton}}{1959}]%
        {McNaughton_59}
\bibfield{author}{\bibinfo{person}{{R. McNaughton}}.}
  \bibinfo{year}{1959}\natexlab{}.
\newblock \showarticletitle{{Scheduling with deadlines and loss functions}}.
\newblock \bibinfo{journal}{{\em {Management Science}\/}} \bibinfo{volume}{6},
  \bibinfo{number}{1} (\bibinfo{year}{1959}), \bibinfo{pages}{1--12}.
\newblock


\bibitem[\protect\citeauthoryear{Saifullah, Li, Agrawal, Lu, and
  Gill}{Saifullah et~al\mbox{.}}{2013}]%
        {Saifullah_LALG_13}
\bibfield{author}{\bibinfo{person}{A. Saifullah}, \bibinfo{person}{J. Li},
  \bibinfo{person}{K. Agrawal}, \bibinfo{person}{C. Lu}, {and}
  \bibinfo{person}{C. Gill}.} \bibinfo{year}{2013}\natexlab{}.
\newblock \showarticletitle{Multi-core real-time scheduling for generalized
  parallel task models}.
\newblock \bibinfo{journal}{{\em {Real-Time Systems}\/}} \bibinfo{volume}{49},
  \bibinfo{number}{2} (\bibinfo{year}{2013}), \bibinfo{pages}{404--432}.
\newblock


\bibitem[\protect\citeauthoryear{{Santos-Jr.} and Lima}{{Santos-Jr.} and
  Lima}{2012}]%
        {Santos_Lima_12}
\bibfield{author}{\bibinfo{person}{Jos{\'e}~Augusto {Santos-Jr.}} {and}
  \bibinfo{person}{George Lima}.} \bibinfo{year}{2012}\natexlab{}.
\newblock \showarticletitle{Sufficient Schedulability Tests for EDF-Scheduled
  Real-Time Systems under Interference of a High Priority Task}. In
  \bibinfo{booktitle}{{\em Proc. of the 2nd Brazilian Symposium on Computer
  Systems Engineering (SBESC)}}. \bibinfo{pages}{1--6}.
\newblock


\bibitem[\protect\citeauthoryear{{Santos-Jr.}, Lima, and Bletsas}{{Santos-Jr.}
  et~al\mbox{.}}{2013}]%
        {Santos_LB_13}
\bibfield{author}{\bibinfo{person}{Jos{\'e} Augusto~Matos {Santos-Jr.}},
  \bibinfo{person}{George Lima}, {and} \bibinfo{person}{Konstantinos Bletsas}.}
  \bibinfo{year}{2013}\natexlab{}.
\newblock \showarticletitle{On the processor utilisation bound of the {C=D}
  scheduling algorithm}. In \bibinfo{booktitle}{{\em Proc. Real-time systems:
  the past, the present, and the future (Alanfest 2013) -
  {http://www.cs.unc.edu/˜baruah/AlanFest/Procs.pdf}}}.
  \bibinfo{pages}{119–132}.
\newblock


\bibitem[\protect\citeauthoryear{{Santos-Jr.}, Lima, and Bletsas}{{Santos-Jr.}
  et~al\mbox{.}}{2014}]%
        {Santos_LB_14}
\bibfield{author}{\bibinfo{person}{Jos{\'e} Augusto~Matos {Santos-Jr.}},
  \bibinfo{person}{George Lima}, {and} \bibinfo{person}{Konstantinos Bletsas}.}
  \bibinfo{year}{2014}\natexlab{}.
\newblock \showarticletitle{Efficient schedulability tests for real-time
  embedded systems with urgent routines}.
\newblock \bibinfo{journal}{{\em Design Automation for Embedded Systems\/}}
  \bibinfo{volume}{18} (\bibinfo{date}{Aug.} \bibinfo{year}{2014}),
  \bibinfo{pages}{19--38}.
\newblock
\showDOI{%
\url{https://doi.org/10.1007/s10617-013-9113-6}}


\bibitem[\protect\citeauthoryear{Shi, G{\"{u}}nzel, Ueter, von~der
  Br{\"{u}}ggen, and Chen}{Shi et~al\mbox{.}}{2024}]%
        {Shi_GUBC_24}
\bibfield{author}{\bibinfo{person}{Junjie Shi}, \bibinfo{person}{Mario
  G{\"{u}}nzel}, \bibinfo{person}{Niklas Ueter}, \bibinfo{person}{Georg von~der
  Br{\"{u}}ggen}, {and} \bibinfo{person}{Jian-Jia Chen}.}
  \bibinfo{year}{2024}\natexlab{}.
\newblock \showarticletitle{{DAG} Scheduling with Execution Groups}. In
  \bibinfo{booktitle}{{\em {Proc. 30th IEEE Real-Time and Embedded Technology
  and Applications Symposium (RTAS)}}}. \bibinfo{pages}{149--160}.
\newblock


\bibitem[\protect\citeauthoryear{Wasly and Pellizzoni}{Wasly and
  Pellizzoni}{2019}]%
        {Wasly_Pellizzoni_19}
\bibfield{author}{\bibinfo{person}{Saud Wasly} {and} \bibinfo{person}{Rodolfo
  Pellizzoni}.} \bibinfo{year}{2019}\natexlab{}.
\newblock \showarticletitle{Bundled Scheduling of Parallel Real-time Tasks}. In
  \bibinfo{booktitle}{{\em {Proc. 25th IEEE Real-Time and Embedded Technology
  and Applications Symposium (RTAS)}}}. \bibinfo{pages}{130--142}.
\newblock


\bibitem[\protect\citeauthoryear{Zhang and Burns}{Zhang and Burns}{2009}]%
        {Zhang_Burns_09}
\bibfield{author}{\bibinfo{person}{Fengxiang Zhang} {and} \bibinfo{person}{Alan
  Burns}.} \bibinfo{year}{2009}\natexlab{}.
\newblock \showarticletitle{{Improvement to Quick Processor-Demand Analysis for
  EDF-Scheduled Real-Time Systems}}. In \bibinfo{booktitle}{{\em Proc. 21st
  ECRTS}}. \bibinfo{pages}{76--86}.
\newblock


\bibitem[\protect\citeauthoryear{Zhang, Burns, and Baruah}{Zhang
  et~al\mbox{.}}{2011}]%
        {Zhang_BB_11}
\bibfield{author}{\bibinfo{person}{Fengxiang Zhang}, \bibinfo{person}{Alan
  Burns}, {and} \bibinfo{person}{Sanjoy~K. Baruah}.}
  \bibinfo{year}{2011}\natexlab{}.
\newblock \showarticletitle{{Sensitivity analysis of arbitrary deadline
  real-time systems with EDF scheduling}}.
\newblock \bibinfo{journal}{{\em {Real-Time Systems}\/}}  \bibinfo{volume}{47}
  (\bibinfo{year}{2011}), \bibinfo{pages}{224–252}.
\newblock


\bibitem[\protect\citeauthoryear{Zhao, Dai, and Bate}{Zhao
  et~al\mbox{.}}{2022}]%
        {Zhao_DB_22}
\bibfield{author}{\bibinfo{person}{Shuai Zhao}, \bibinfo{person}{Xiaotian Dai},
  {and} \bibinfo{person}{Iain Bate}.} \bibinfo{year}{2022}\natexlab{}.
\newblock \showarticletitle{{DAG Scheduling and Analysis on Multi-core Systems
  by Modelling Parallelism and Dependency}}.
\newblock \bibinfo{journal}{{\em {Transactions on Parallel and Distributed
  Systems (TPDS)}\/}} \bibinfo{volume}{22}, \bibinfo{number}{12}
  (\bibinfo{year}{2022}), \bibinfo{pages}{4019--4038}.
\newblock
\showDOI{%
\url{https://doi.org/10.1109/TPDS.2022.3177046}}


\end{thebibliography}

\end{document}